\newtheorem{proposition}{Proposition}
\newenvironment{proof}{{\it Proof.}}{\hfill$\square$\vspace{0.3cm}\\}
\newcommand{\prodscal}[2]{\left\langle#1,#2\right\rangle}
  \newcommand{\argmin}{\mathop{\vphantom{\min}\mathchoice
  {\vcenter{\hbox{ argmin}}}
  {\vcenter{\hbox{ argmin}}}{\mathrm{argmin}}{\mathrm{argmin}}}\displaylimits}
\newcommand{\matnorm}[1]{{\left\vert\kern-0.25ex\left\vert\kern-0.25ex\left\vert #1 
    \right\vert\kern-0.25ex\right\vert\kern-0.25ex\right\vert}}
\begin{document}
%
\title{Preconditioned P-ULA for Joint Deconvolution-Segmentation of Ultrasound Images~-- Extended Version}

\author{Marie-Caroline~Corbineau,
        Denis~Kouam\'e,
        Emilie~Chouzenoux,
        Jean-Yves~Tourneret,
        Jean-Christophe~Pesquet%
\thanks{M.-C. Corbineau, E. Chouzenoux and J.-C. Pesquet are with the CVN, CentraleSup\'elec, INRIA Saclay, University Paris-Saclay, Gif-Sur-Yvette, France (e-mail: first-name.last-name@centralesupelec.fr)}%
\thanks{D. Kouam\'e is with the IRIT, CNRS UMR 5505, University of Toulouse, Toulouse, France (e-mail: kouame@irit.fr).}%
\thanks{J.-Y. Tourneret is with the IRIT, ENSEEIHT, T\'eSA, University of Toulouse, Toulouse, France (e-mail: Jean-Yves.Tourneret@enseeiht.fr).}%
}

\maketitle

\begin{abstract}
Joint deconvolution and segmentation of ultrasound images is a challenging problem in medical imaging. By adopting a hierarchical Bayesian model, we propose an accelerated Markov chain Monte Carlo scheme where the tissue reflectivity function is sampled thanks to a recently introduced proximal unadjusted Langevin algorithm. This new approach is combined with a forward-backward step and a preconditioning strategy to accelerate the convergence, and with a method based on the majorization-minimization principle to solve the inner nonconvex minimization problems. As demonstrated in numerical experiments conducted on both simulated and \textit{in vivo} ultrasound images, the proposed method provides high-quality restoration and segmentation results and is up to six times faster than an existing Hamiltonian Monte Carlo method.
\end{abstract}

\begin{IEEEkeywords}
Ultrasound, Markov chain Monte Carlo method, proximity operator, deconvolution, segmentation.
\end{IEEEkeywords}

\section{Introduction}
\label{sec:intro}
\IEEEPARstart{I}{n} medical ultrasound (US) imaging, useful information can be drawn from the statistics of the tissue reflectivity function (TRF) to perform segmentation~\cite{pereyra2012segmentation}, tissue characterization~\cite{bernard2006statistics}, or classification~\cite{alessandrini2011restoration}. Let $x\in\mathbb{R}^{n}$ and $y\in\mathbb{R}^n$ be the vectorized TRF and radio-frequency (RF) image, respectively. The following simplified model is used \cite{jensen1993deconvolution,ng2006modeling}
\begin{equation}
y=Hx+\omega,
\label{eq:model_y}
\end{equation}  
where $H\in\mathbb{R}^{n\times n}$ is a linear operator that models the convolution with the point spread function (PSF) of the probe, and $\omega\sim\mathcal{N}(0,\sigma^2\mathbb{I}_n)$, with $\mathcal{N}$ the normal distribution, and $\mathbb{I}_n$ the identity matrix in $\mathbb{R}^{n\times n}$. This paper assumes that the PSF is known, while $\sigma^2 >0$ is an unknown parameter to be estimated. 
The TRF is comprised of $K$ different tissues, which are identified by a hidden label field 
$z = (z_i)_{1 \leq i\leq n} \in\{1,\ldots,K\}^n$. 
For every $k\in\{1,\ldots,K\}$, the $k^{\mathrm{th}}$ region is modeled by a generalized Gaussian distribution ($\mathcal{GGD}$) \cite{alessandrini2011restoration,zhao2016joint}, which is parametrized by a shape parameter $\alpha_k\in [0,3]$, related to the scatterer concentration, and a scale parameter $\beta_k>0$, linked to the signal energy. Given $y$ and $H$, the aim is to estimate a deblurred image $x$ \cite{jensen1992deconvolution,michailovich2007blind}, as well as $\sigma^2$, $\alpha = (\alpha_k)_{1 \leq k \leq K}$, $\beta = (\beta_k)_{1 \leq k \leq K}$, and the label field $z$.
Due to the interdependence of these unknowns, it is beneficial to perform the deconvolution and segmentation tasks in a joint manner \cite{ayasso2010joint,pirayre2017hogmep}. This is achieved in \cite{zhao2016joint} by considering a hierarchical Bayesian model, which is used within a Markov chain Monte Carlo (MCMC) method~\cite{pereyra2016survey} to sample $x$, $\sigma^2$, $\alpha$, $\beta$, and $z$ according to the full conditional distributions. Despite promising results in image restoration and segmentation, the method in \cite{zhao2016joint} is of significant computational complexity, in particular due to the adjusted Hamiltonian Monte Carlo (HMC) method \cite{neal2011mcmc,robert2018accelerating} used to sample the TRF.
%
Recently, efficient and reliable stochastic sampling strategies have been devised \cite{durmus2018efficient,pereyra2016proximal,schreck2016shrinkage} using the proximity operator \cite{bauschke2017convex}, which is known as a useful tool for large-scale nonsmooth optimization \cite{combettes2011proximal}. 
In this work, we investigate an MCMC algorithm to perform the joint deconvolution and segmentation of US images, where the TRF is sampled with a scheme inspired from the proximal unadjusted Langevin algorithm (P-ULA)~\cite{pereyra2016proximal}. The latter generates samples according to an approximation of the target distribution without acceptance test, while being geometrically ergodic whereas classical unadjusted Langevin algorithms may have convergence issues.

\subsection{Main contributions}
Our contributions include \textit{i}) the proposition of an original accelerated preconditioned version of P-ULA (PP-ULA), which relies on the use of a variable metric forward-backward strategy~\cite{stuart2004conditional,chouzenoux2014variable}, \textit{ii}) an efficient solver based on the majorization-minimization (MM) principle to tackle the involved nonconvex priors, and \textit{iii})
a new hybrid Gibbs sampler yielding a substantial reduction of the computational time needed to perform joint high-quality deconvolution and segmentation of both simulated and \textit{in vivo} US images.
\\
This article is organized as follows: Section~\ref{sec:model} describes the investigated Bayesian model and sampling strategy. Section~\ref{sec:p-ula} focuses on the proposed TRF sampling method. Numerical experiments are finally presented in Section~\ref{sec:expe}.

\section{Bayesian Model}
\label{sec:model}
\subsection{Priors}
Fig.~\ref{fig:bayesian_model} illustrates the hierarchical model used to perform a joint deconvolution-segmentation of ultrasound images.
The following likelihood function is derived from \eqref{eq:model_y}
\begin{equation}
p(y|x,\sigma^2)=\frac{1}{(2\pi\sigma^2)^{n/2}}\exp\left(-\frac{\|y-Hx\|^2}{2\sigma^2}\right).
\end{equation} 

\noindent The TRF is a mixture of $\mathcal{GGD}$s which, under the assumption that the pixel values are independent given $z$, leads to
\begin{small}
\begin{equation}
p(x|\alpha,\beta,z)=\prod_{i=1}^n\frac{1}{2\beta_{z_i}^{1/\alpha_{z_i}}\Gamma(1+1/\alpha_{z_i})}\exp\left(-\frac{|x_i|^{\alpha_{z_i}}}{\beta_{z_i}}\right).
\label{eq:ggd_distri}
\end{equation}
\end{small}

\begin{figure}
\centering
\includegraphics[width=0.5\textwidth]{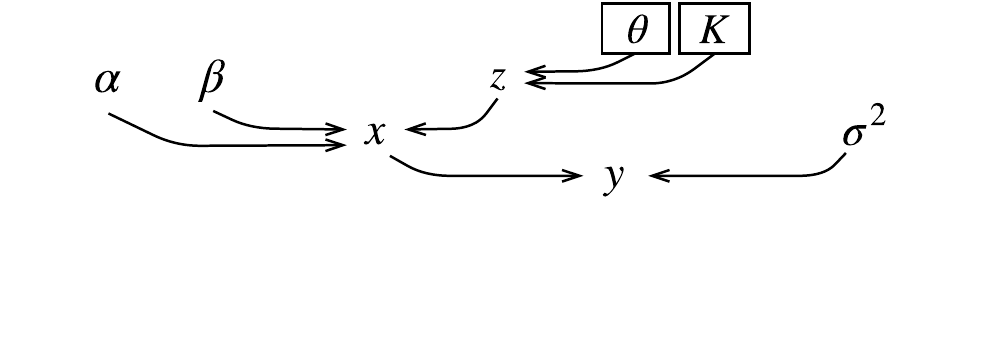}
\vspace*{-1.2cm}
\caption{Hierarchical Bayesian model. Parameters in boxes are fixed in advance.}
\label{fig:bayesian_model}
\end{figure}

\noindent Uninformative Jeffreys priors are assigned to the noise variance and scale parameters, while the shape parameters are assumed to be uniformly distributed between 0 and 3.
The labels $z$ are modeled by a Potts Markov random field with prior
\begin{equation}
p(z)=\frac{1}{C(\theta)}\exp\left(\sum\nolimits_{i=1}^n\sum\nolimits_{j\in\mathcal{V}(i)}\theta\delta(z_i-z_j)\right),
\label{eq:distri_labels}
\end{equation}
with $\delta$ the Kronecker function, $C(\theta)>0$ a normalizing constant, $\theta>0$ a granularity coefficient, and $\mathcal{V}(i)$ the set of four closest neighbours of the $i^{\mathrm{th}}$ pixel. 
%
\subsection{Conditional distributions}
The different variables are sampled according to their conditional distributions, which are provided in this section.
The conditional distribution of the noise variance is derived from the Bayes theorem as follows
\begin{equation}
p(\sigma^2|y,x)\propto\mathcal{IG}\left(\frac{n}{2},\frac{\|y-Hx\|^2}{2}\right),
\label{eq:distri_sigma}
\end{equation}

\noindent where $\mathcal{IG}$ denotes the inverse gamma distribution. Assuming that the different regions have independent shape and scale parameters for every $k\in\{1,\ldots,K\}$, we obtain

\begin{align}
p(\alpha_k|x,\beta,z)&\propto \prod_{i\in\mathcal{I}_k}\frac{\mathbf{1}_{[0,3]}(\alpha_k)}{2\beta_{k}^{1/\alpha_{k}}\Gamma\left(1+1/\alpha_{k}\right)}\exp\left(-\frac{|x_i|^{\alpha_{k}}}{\beta_{k}}\right),
\label{eq:distri_alpha}\\
p(\beta_k|x,\alpha,z)&\propto \mathcal{IG}\left(\frac{n_k}{\alpha_k},\sum\nolimits_{i\in\mathcal{I}_k}|x_i|^{\alpha_k}\right),
\label{eq:distri_beta}
\end{align}
\noindent with $\mathcal{I}_k=\{i\in\{1,\ldots,n\} \mid z_i=k\}$, $n_k$ the number of elements in $\mathcal{I}_k$, and $\mathbf{1}_{[0,3]}$ the characteristic function of $[0,3]$. 
Samples for $\alpha_k$ are drawn from \eqref{eq:distri_alpha} by using a Metropolis-Hastings (MH) random walk. 
For every pixel $i\in\{1,\ldots,n\}$ and every region $k\in\{1,\ldots,K\}$, the Bayes rule applied to the segmentation labels leads to
\begin{equation}
p(z_i=k|x,\alpha,\beta,z_{\mathcal{V}(i)})\propto \frac{\exp\left(\sum_{j\in\mathcal{V}(i)}\theta\delta(z_j-k)-\frac{|x_i|^{\alpha_k}}{\beta_k}\right)}{2\beta_k^{1/\alpha_k}\Gamma(1+1/\alpha_k)}\label{eq:label_notnormalized}
\end{equation}
\noindent where $z_{\mathcal{V}(i)}$ denotes the label values in the neighborhood of $i$. As a consequence, the label $z_i$ is drawn from $\{1,\ldots,K\}$ using the above 
probabilities (suitably normalized).
%

\section{Preconditioned P-ULA}
\label{sec:p-ula}
\subsection{Notation}
Let $\mathcal{I}_{<1}=\{i\in\{1,\ldots,n\} \mid \alpha_{z_i}<1\}$ and $\mathcal{I}_{\geq 1}=\{1,\ldots,n\}\setminus \mathcal{I}_{<1}$. 
Let $\mathcal{S}_n$ denote the set of symmetric positive definite matrices in $\mathbb{R}^{n\times n}$, and let $\matnorm{\cdot}$ denote the spectral norm. For every $Q\in\mathcal{S}_n$, let $\|\cdot\|_Q=\prodscal{\cdot}{Q\cdot}^{1/2}$. For every function $f:\mathbb{R}^n\rightarrow\mathbb{R}\cup\{+\infty\}$, the proximity operator of $f$ at $x\in\mathbb{R}^n$ with respect to the norm induced by $Q^{-1}\in\mathcal{S}_n$ is defined as follows~\cite{bauschke2017convex},
\begin{equation}
\mathrm{prox}_{f}^Q(x)\in\mathrm{Argmin}_{u\in\mathbb{R}^n} \frac{1}{2}\|x-u\|^2_{Q^{-1}}+f(u). 
\label{eq:proxQ}
\end{equation}
If $Q$ is not specified, then $Q = \mathbb{I}_n$. If $\mathrm{prox}_f$ is simple to compute, then the solution to \eqref{eq:proxQ} for an arbitrary $Q \in \mathcal{S}_n$ can be obtained by using the dual forward-backward (DFB) algorithm~\cite{combettes2011proximity}, summarized in Algorithm~\ref{algo:dual_prox_metric}. If $f$ is proper, lower semicontinuous, and convex, then the sequence $(u^{(p)})_{p\in \mathbb{N}}$ generated by Algorithm~\ref{algo:dual_prox_metric} converges to $\mathrm{prox}_f^Q(x)$. 

\begin{algorithm}
\SetAlgoLined
Initialize dual variable $w^{(1)}\in\mathbb{R}^N$\;
Set $\rho=\matnorm{Q}^{-1}$, $\epsilon\in]0,\min\{1,\rho\}[$, $\eta\in[\epsilon,2\rho-\epsilon]$\;
\For{$p=1,...$}{
$u^{(p)}=x-Qw^{(p)} $\;
$w^{(p+1)}=w^{(p)}+\eta u^{(p)}-\eta\mathrm{prox}_{\eta^{-1} f}(\eta^{-1} w^{(p)}+u^{(p)})$
}
\caption{DFB algorithm to compute $\mathrm{prox}_f^Q(x)$ }\label{algo:dual_prox_metric}
\end{algorithm}
\vspace*{-0.3cm}
%
\subsection{Sampling the TRF}
The conditional distribution of the TRF is
\begin{equation}
\pi (x)= p(x|y,\sigma^2,\alpha,\beta,z)\propto \exp\left(-\frac{\|y-Hx\|^2}{2\sigma^2}-g(x)\right),
\label{eq:TRF_distri}
\end{equation}
\noindent where $(\forall x\in\mathbb{R}^n)$ $g(x)=\sum_{i=1}^n\beta_{z_i}^{-1}|x_i|^{\alpha_{z_i}}$. Let $\gamma>0$ and let $Q\in \mathcal{S}_n$ be a preconditioning matrix used to accelerate the sampler~\cite{marnissi2018majorize}. Following~\cite{pereyra2016proximal}, $\pi(x)$ is approximated by
\begin{equation}
\pi_\gamma(x) \propto \sup_{u\in\mathbb{R}^n}\pi(u)\exp\left(-\frac{\|u-x\|_{Q^{-1}}^2}{2\gamma}\right).
\label{eq:approx_target}
\end{equation}
As shown in Appendix, the Euler discretization of the Langevin diffusion equation~\cite{roberts2002langevin} applied to $\pi_{\gamma}$ with stepsize~$2\gamma$ and preconditioning matrix
$Q$ leads to
\begin{equation}
x^{(t+1)}= \mathrm{prox}_{\gamma g}^Q(\tilde{x}^t) +\sqrt{2\gamma}Q^{\frac{1}{2}}\omega^{(t+1)},
\label{eq:p-ula_it}
\end{equation}
where $\omega^{(t+1)}\sim\mathcal{N}(0,\mathbb{I}_n)$ and 
\begin{equation}
\tilde{x}^{(t)}=x^{(t)}-\frac{\gamma}{\sigma^2}QH^\top(Hx^{(t)}-y).
\label{eq:forward}
\end{equation}
Since the proposed sampling strategy is unadjusted, \eqref{eq:p-ula_it} is not followed by an acceptance test. The bias with respect to $\pi$ increases with $\gamma$, as the speed of convergence of the algorithm. A compromise must be found when setting $\gamma$. 

When $\mathcal{I}_{<1}$ is not empty, we use the MM principle~\cite{schifano2010majorization} to replace the nonconvex minimization problem involved in the computation of
$\mathrm{prox}_{\gamma g}^Q$ with a sequence of convex surrogate problems.
Let $\mathcal{J}\subset\mathcal{I}_{<1}$. We define $h_{\mathcal{J}}$ at every $(u,v)\in\mathbb{R}^n\times \mathbb{R}_{+*}^n$ by
\begin{equation*}
h_{\mathcal{J}}(u,v)=\sum\limits_{i\in\mathcal{I}_{\geq 1}} \frac{|u_i|^{\alpha_{z_i}}}{\beta_{z_i}}+\sum\limits_{j\in\mathcal{J}} \frac{(1-\alpha_{z_j})v_j^{\alpha_{z_j}}+\alpha_{z_j}v_j^{\alpha_{z_j}-1}|u_j|}{\beta_{z_j}}.
\end{equation*}
\noindent From concavity, we deduce that, for every $v\in\mathbb{R}_{+*}^n$ and $u\in\mathbb{R}^n$ such that $\mathcal{J}\subset  \{i\in\mathcal{I}_{<1}\mid |u_i|>0\}$, the following majoration property holds
\begin{equation*}
h_{\mathcal{J}}(u,v) \geq \sum\limits_{i\in\mathcal{I}_{\geq 1}\cup \mathcal{J}} \frac{|u_i|^{\alpha_{z_i}}}{\beta_{z_i}} = h_{\mathcal{J}}(u,(|u_i|)_{1\leq i\leq n}).
\end{equation*}
Since $h_{\mathcal{J}}(\cdot,v)$ is convex and separable, its proximity operator in the Euclidean metric is straightforward to compute. More precisely, 
for every $i\in\mathcal{I}_{\geq 1}$, $\eta>0$ and $s\in\mathbb{R}$, $\mathrm{prox}_{\eta^{-1}|\cdot|^{\alpha_{z_i}}}(s)$ has either a closed form \cite{Chaux2007} or can be found using a bisection search in $[0,|s|]$. Algorithm~\ref{algo:dual_prox_metric} can then be called, in order to compute the proximity operator of $h_{\mathcal{J}}(\cdot,v)$ in any metric $Q \in \mathcal{S}_n$. This leads to Algorithm~\ref{algo:MM} which generates a sequence $(u^{(q)})_{q\in\mathbb{N}}$ estimating $\mathrm{prox}_{\gamma g}^Q(\tilde{x}^{(t)})$.
\begin{algorithm}
\SetAlgoLined
Initialize $u^{(1)}\in\mathbb{R}^n$\;
\For{$q=1,...$}{
$\mathcal{J}^{(q)}=\{i\in\mathcal{I}_{<1} \mid |u_i^{(q)}|>0\}$\; 
$v^{(q)}= (|u_i^{(q)}|)_{1\leq i\leq n} $\;
$u^{(q+1)}= \mathrm{prox}^Q_{\gamma h_{\mathcal{J}^{(q)}}(\cdot,v^{(q)})}(\tilde{x}^{(t)})$ (using Alg.~\ref{algo:dual_prox_metric})
}
\caption{MM principle to compute $\mathrm{prox}_{\gamma g}^Q$.}
\label{algo:MM}
\end{algorithm}

The resulting Gibbs sampler is summarized in Algorithm~\ref{algo:gibbs}.
\begin{center}
\begin{algorithm}
\SetAlgoLined
\item[1]Sample the noise variance $\sigma^2$ according to \eqref{eq:distri_sigma}\;
\item[2]Sample the shape parameter $\alpha$ using MH with \eqref{eq:distri_alpha}\;
\item[3]Sample the scale parameter $\beta$ according to \eqref{eq:distri_beta}\;
\item[4]Sample the hidden label field $z$ using
\eqref{eq:label_notnormalized} 
\;
\item[5]Sample the TRF $x$ using PP-ULA \eqref{eq:p-ula_it}-\eqref{eq:forward}.
\caption{Hybrid Gibbs sampler}
\label{algo:gibbs}
\end{algorithm}
\end{center}

\section{Numerical experiments}
\label{sec:expe}
\subsection{Experimental settings}
Six experiments are presented.
Simu1 and Simu2 refer to simulated images with two and three regions, respectively. Kidney denotes the tissue-mimicking phantom produced from $10^6$ scatterers uniformly distributed over a digital image of human kidney tissue provided with the Field II ultrasound simulator~\cite{jensen2004simulation}.
The amplitude of each scatterer is produced using a zero-mean Gaussian distribution whose variance is linked to the amplitude of the point on the digital image.
The PSF for the aforementioned simulations is obtained with Field II and corresponds to a 3.5~MHz linear probe. 
We also perform tests on three real ultrasound images. Thyroid denotes a real RF image of thyroidal flux obtained \textit{in vivo} with a 7.8~MHz probe. The unknown PSF is identified using the RF image of a wire cross-section which was acquired with the same probe. Since the diameter of the wire is of the order of a few $\mu $m, its cross-section can almost be viewed as a point. Thus, its RF image provides a good approximation of the PSF.
Finally, Bladder and KidneyReal refer to the RF images of a mouse bladder and mouse kidney, respectively. Both images were obtained \textit{in vivo} with a 20~MHz probe. The PSF for these two real images is estimated using the same method as for Thyroid.
The number of regions $K$ is set to 2 for Simu1 and KidneyReal, and it is set to 3 for Simu2, Kidney, Thyroid and Bladder.
The test settings and images can be found in Table~\ref{tab:test_settings} and Figs.~\ref{fig:TRF_simu} and \ref{fig:TRF_real} (first column), respectively. 

\begin{table}
\centering
\begin{tabular}{crcccccc}
\toprule
       && Simu1 & Simu2 & Kidney & Thyroid & Bladder & KidneyReal \\
\midrule
&Size   & 256$\times$256 & 256$\times$256 & 294$\times$354 & 870$\times$140 & 370$\times$256 & 350$\times$200\\
&Data type & Simulated & Simulated & Tissue-mimicking & Real \textit{in vivo} & Real \textit{in vivo} & Real \textit{in vivo}\\
\midrule
\multirow{3}{*}{Ground-truth}&TRF &  \checkmark &  \checkmark &  \checkmark & - & - & -\\
&$\mathcal{GGD}$ parameters &  \checkmark &  \checkmark & - & - & - & -\\
&Segmentation &  \checkmark &  \checkmark & - & - & - & -\\
\bottomrule \\
\end{tabular}
\caption{Test settings: size of test images, data type,  and availability of the ground-truth.}
\label{tab:test_settings}
\end{table}

The TRF is initialized using a pre-deconvolved image obtained with a Wiener filter, while the segmentation is initialized by applying a $7\times 7$ median filter and the Otsu method \cite{otsu1979threshold} to the B-mode of the initial TRF. Shape and scale parameters are randomly selected in $[0.5,1.5]$, and $[1,200]$, respectively. The granularity parameter $\theta$ for the Potts model \eqref{eq:distri_labels} is adjusted to ensure that the percentage of isolated points in the segmentation, obtained with a $3\times 3$ median filter, is close to 0.05, 0.1, 0.8, 0.08, 0.08 and 0.08 for Simu1, Simu2, Kidney, Thyroid, Bladder and KidneyReal, respectively.

\subsection{Comparisons and evaluation metrics}
All computational times are given for simulations run on Matlab 2018b on an Intel Xeon CPU E5-1650 3.20GHz. The code for the proposed method is available online\footnote{\url{https://github.com/mccorbineau/PP-ULA}}.
In addition to comparing Algorithm~\ref{algo:gibbs} with HMC~\cite{zhao2016joint}, the quality of the deconvolution is compared with the one obtained with a Wiener filter, where the noise level has been estimated as in~\cite{mallat1999wavelet}, and with the solution to the Lasso problem, where the regularization weight is set \textit{i}) manually when the ground-truth is not available, or \textit{ii}) using a golden-section search to maximize the peak signal-to-noise ratio (PSNR) defined as (with $x^{\rm tr}$ the true TRF and $x^{\rm es}$ the estimated one)
\begin{equation}
\mathrm{PSNR}=10\log_{10}(n\,\max\nolimits_i(x^{\rm tr}_i,x^{\rm es}_i)^2/\|x^{\rm tr}-x^{\rm es}\|^2).
\end{equation}
We also compare our results with the segmentation given by Otsu's method~\cite{otsu1979threshold} applied to the Wiener-deconvolved image, and with the SLaT method~\cite{cai2017three} 
applied to the Lasso-deconvolved image.
PP-ULA is used with $\gamma=0.09$ and $Q$ an approximation of the inverse of the Hessian of the differentiable term in \eqref{eq:TRF_distri} \cite{becker2012quasi}, $Q=\sigma^2(H^\top H +\lambda\mathbb{I}_n)^{-1}$, with $\lambda=0.1$ so that $Q$ is well-defined. 
We have also computed the structural similarity measure (SSIM)~\cite{wang2004image} of the restored TRF and the contrast-to-noise ratio (CNR) \cite{krishnan1997improved} between two windows from different regions of the B-mode TRF images.
%
The segmentation is evaluated according to the percentage of correctly predicted labels,
or overall accuracy (OA). The minimum mean square error (MMSE) estimators of all parameters in HMC and PP-ULA are computed after the burn-in regime. 
Moreover, to evaluate the mixing property of the Markov chain after convergence, we compute the mean square jump (MSJ) per second, which is the ratio of the MSJ to the time per iteration. The MSJ is obtained using $T$ samples of the TRF $(x^{t_0+1},\ldots,x^{t_0+T})$ generated after the burn-in period, 
i.e. $
\mathrm{MSJ}=\left(\frac{1}{T-1}\sum\nolimits_{t=1}^{T-1}\|x^{(t_0+t)}-x^{(t_0+t+1)}\|^2\right)^{1/2}.
$

\subsection{Results on simulated data}

The convergence speed of Algorithm~\ref{algo:gibbs} is empirically observed for Simu1 and Simu2, as illustrated in Fig.~\ref{fig:psnr_vs_time}, where we also display the results of the non-preconditioned P-ULA, for which $Q=\mathbb{I}_n$ and $\gamma=1.99\sigma^2/\matnorm{H}^2$. Comparing P-ULA and PP-ULA on these simulated data allows us to study the effect of adding a preconditioner in the proposed sampling scheme. As reported in Table~\ref{tab:res_simu1_simu2}, P-ULA needs more iterations and more time to converge than PP-ULA: the proposed method is 12.2 and 4.8 times faster than P-ULA on Simu1 and Simu2, respectively. In addition, from Table~\ref{tab:sig_shape_scale} and Fig.~\ref{fig:GGD_Simu1}, we deduce that P-ULA is more biased than PP-ULA, which samples correctly the target distributions. Finally, as one can see in Fig.~\ref{fig:psnr_vs_time} and Table~\ref{tab:psnr_ssim_cnr_simu1_simu2}, P-ULA leads to lower PSNR, SSIM and OA values than PP-ULA. These results clearly emphasize the benefits of preconditioning in this example. 

\begin{table}[!h]
\centering
\setlength\tabcolsep{3pt}
\begin{tabular}{lcrcccccccc}
\toprule
 && && \multicolumn{2}{c}{Iterations} && \multicolumn{2}{c}{Time} && \multicolumn{1}{c}{Mixing property}\\
 \cmidrule{5-6} \cmidrule{8-9} \cmidrule{11-11}
 && && Burn-in & Total && Duration & PP-ULA speed gain && MSJ (per s)\\
 \midrule
 \multirow{3}{*}{Simu1}&& P-ULA && 70000 & 140000 && 2~h~27~min & 12.2 && 665\\
 &&HMC   && 4000  & 8000   && 1~h~08~min & 5.7  && 173\\
 &&PP-ULA&& 4000  & 8000   && 12~min     & 1    && 970\\
 \midrule
  \multirow{3}{*}{Simu2}&& P-ULA && 70000 & 140000 && 3~h~06~min & 4.8 && 590\\
 &&HMC   && 10000  & 20000   && 4~h~14~min & 6.6  && 22\\
 &&PP-ULA&& 10000  & 20000   && 39~min     & 1    && 793\\
 \bottomrule\\
\end{tabular}
\caption{Number of iterations, computational time and MSJ per s for experiments Simu1 and Simu2.}\vspace*{-0.5cm}
\label{tab:res_simu1_simu2}
\end{table}

\begin{table}[!htb]
\centering
\setlength\tabcolsep{3pt}
\begin{tabular}{rcccccccccccccc}
\toprule
 && \multicolumn{5}{c}{Simu1} && \multicolumn{7}{c}{Simu2}\\
 \cmidrule{3-7} \cmidrule{9-15} 
 && $\sigma^2$ & $\alpha_1$ & $\beta_1$ & $\alpha_2$ & $\beta_2$ && 
 $\sigma^2$ & $\alpha_1$ & $\beta_1$ & $\alpha_2$ & $\beta_2$ & $\alpha_3$ & $\beta_3$\\
 \midrule
 True && 0.013 & 1.5 & 1.0 & 0.60 & 1.0    && 33 & 1.5 & 100  & 1.0 & 50  & 0.50 & 4.0\\
 P-ULA && 0.041 & 2.0 & 0.5 & 0.59 & 1.0   && 122 & 2.0 &  330 & 2.0 & 3186 & 0.48 & 3.4\\
 HMC && 0.013 & 1.8 & 1.2 & 0.61 & 1.0     && 34 & 1.4 & 66   & 1.1 & 111 & 0.54 & 5.2\\
 PP-ULA && 0.013& 1.4 & 0.9 & 0.62 & 1.1   && 35 & 2.3 & 2676 & 1.2 & 122 & 0.55 & 5.8 \\
 \bottomrule\\
\end{tabular}
\caption{MMSE Estimates of the noise variance and $\mathcal{GGD}$ parameters.}
\label{tab:sig_shape_scale}
\end{table}

\begin{figure}[!htb]
\centering
\includegraphics[height= 0.2\textwidth, width=0.6\textwidth]{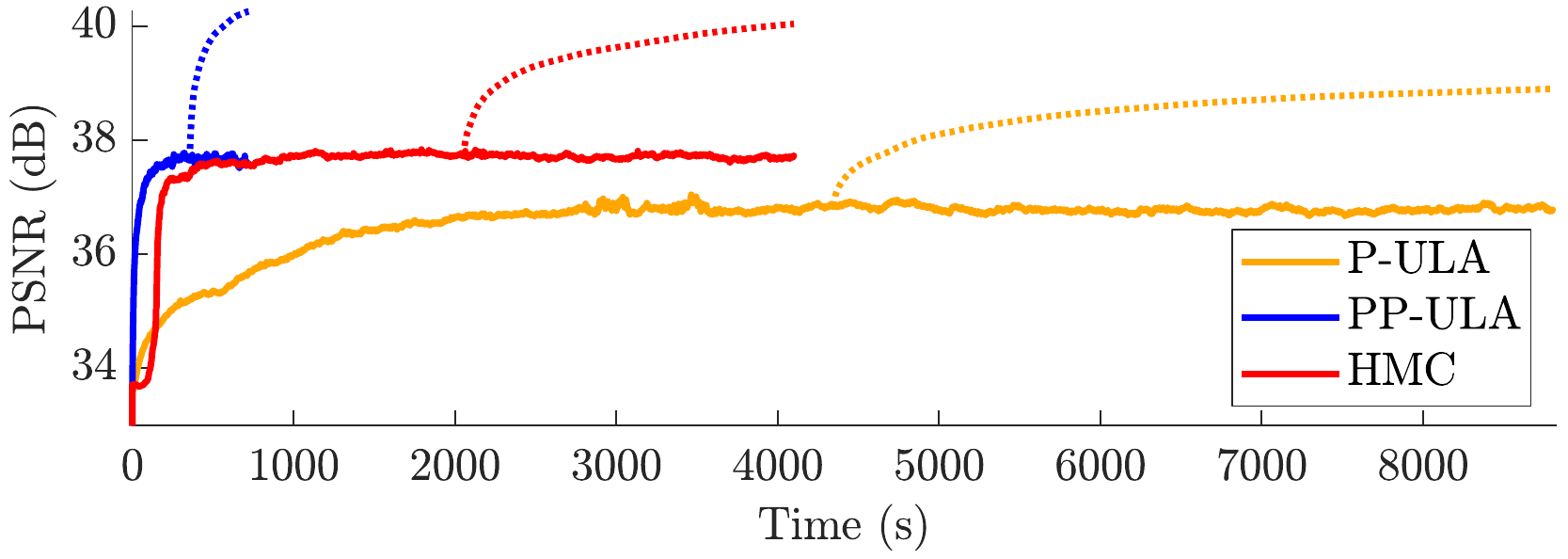}
\vspace*{-0.3cm}
\caption{PSNR along time for Simu1. Dotted lines indicate the PSNR of the MMSE estimator of the TRF after the burn-in regime.}
\label{fig:psnr_vs_time}
\end{figure}

\begin{figure}[!htb]
\centering
\begin{tabular}{ccc}
\hspace*{-0.2cm}
\includegraphics[width=0.33\textwidth, height=0.2\textwidth]{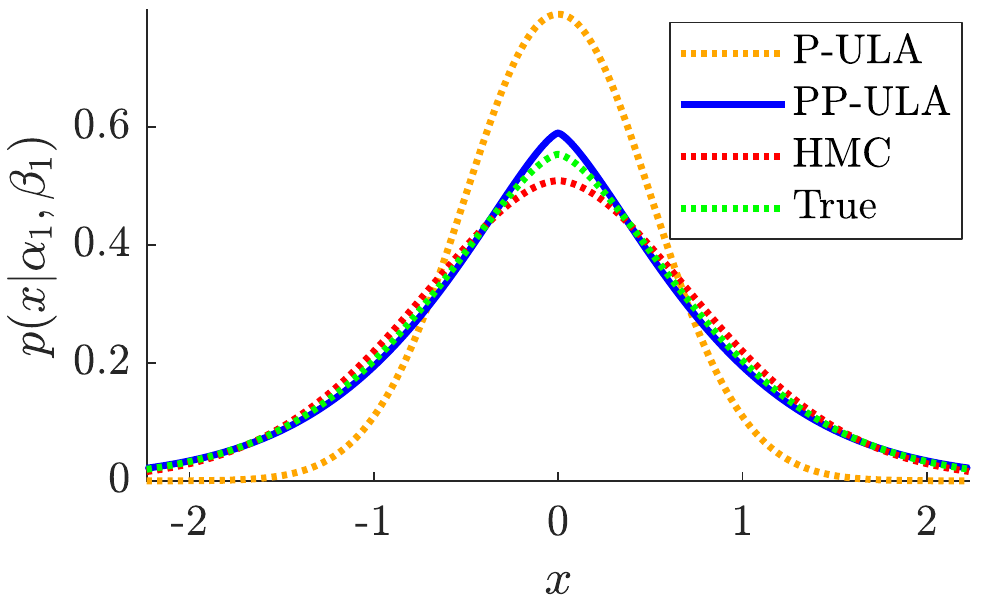}&
\includegraphics[width=0.33\textwidth, height=0.2\textwidth]{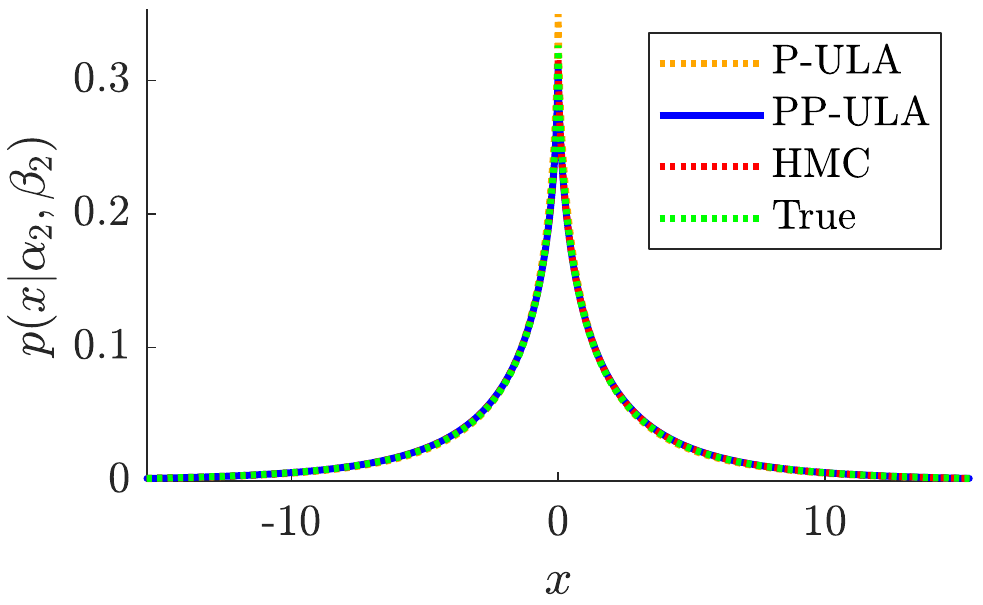}
\end{tabular}
\vspace*{-0.2cm}
\caption{Simu1, $\mathcal{GGD}$ distributions  \eqref{eq:ggd_distri} of regions 1 (left) and 2 (right).}
\label{fig:GGD_Simu1}
\end{figure}

\begin{table}[!htb]
\centering
\setlength\tabcolsep{3pt}
\begin{tabular}{rccccccccc}
\toprule
 & \multicolumn{4}{c}{Simu1} & & \multicolumn{4}{c}{Simu2} \\
 \cmidrule{2-5}  \cmidrule{7-10} 
         & PSNR & SSIM & CNR & OA &&   PSNR & SSIM & CNR & OA \\
     \midrule    
 Wiener - Otsu & 37.1& 0.57 & 1.26 & 99.5 && 35.4 & 0.63 & 0.97 & 96.0\\
 Lasso - SLaT~\cite{cai2017three} & 39.2 & 0.60 & 1.15 & 99.6 && 37.8 & 0.70 & 0.99 & 98.3\\
 P-ULA & 38.9 & 0.45 & \textbf{1.82} & 98.7 && 37.1 & 0.57 & 1.59 & 94.9\\ 
 HMC     & 40.0 & \textbf{0.62} & 1.47 & \textbf{99.7}&& 36.4 & 0.64 & 1.59 & 98.5\\
 PP-ULA   & \textbf{40.3} & \textbf{0.62} & \underline{1.51} & \textbf{99.7}&& \textbf{38.6}& \textbf{0.71} &\textbf{1.64} & \textbf{98.7}\\
 \bottomrule\\
\end{tabular}
\caption{ PSNR, SSIM, CNR and segmentation overall accuracy (OA) for simulated data.}
\label{tab:psnr_ssim_cnr_simu1_simu2}
\end{table}

From Table~\ref{tab:res_simu1_simu2}, PP-ULA is 5.7 and 6.6 times faster than HMC on Simu1 and Simu2 and has better mixing properties, as shown by the MSJ per second.
Visual results from Fig.~\ref{fig:TRF_simu} and CNR values in Table~\ref{tab:psnr_ssim_cnr_simu1_simu2} show that the contrast obtained with PP-ULA is better than with competitors on Simu2, and is second best after P-ULA on Simu1. However, it should be noted that the PSNR and SSIM obtained on Simu1 with P-ULA are much lower than with the other methods.
In addition, the PSNR and SSIM values from Table~\ref{tab:psnr_ssim_cnr_simu1_simu2} obtained with PP-ULA are equivalent or higher than all competitors for these two experiments.
Visual segmentation results are shown in Fig.~\ref{fig:seg_simu}, and OA values can be found in Table~\ref{tab:psnr_ssim_cnr_simu1_simu2}. For these simulated images, more pixels are correctly labeled with PP-ULA than with competitors.

\begin{figure}[!htb]
\centering
\setlength\tabcolsep{0.01pt}
\begin{tabular}{ccccccc}
\includegraphics[height=0.13\textwidth, width=0.14\textwidth]{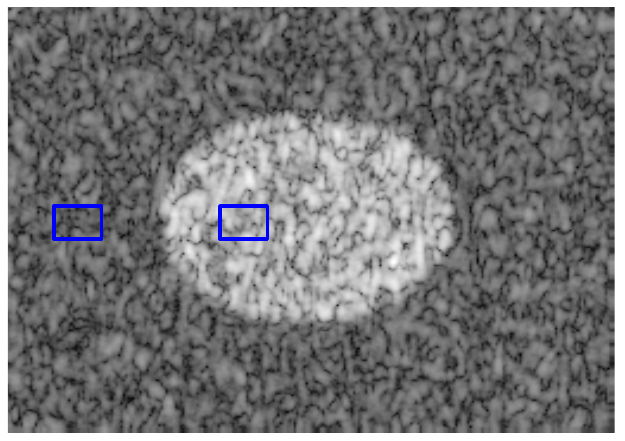}&
\includegraphics[height=0.13\textwidth, width=0.14\textwidth]{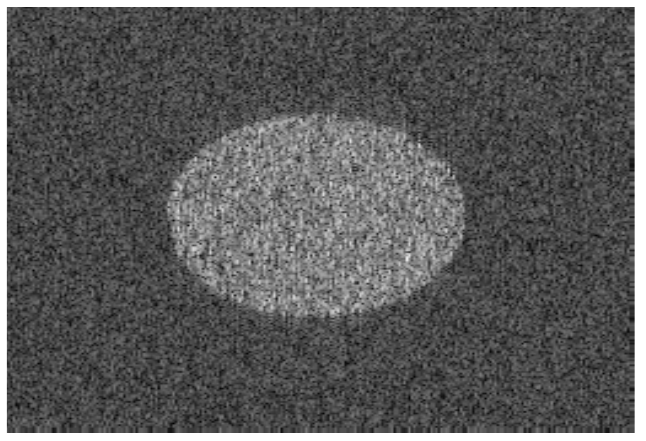}&
\includegraphics[height=0.13\textwidth, width=0.14\textwidth]{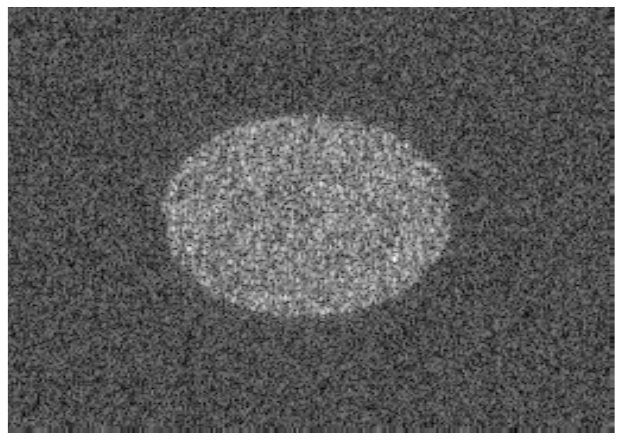}&
\includegraphics[height=0.13\textwidth, width=0.14\textwidth]{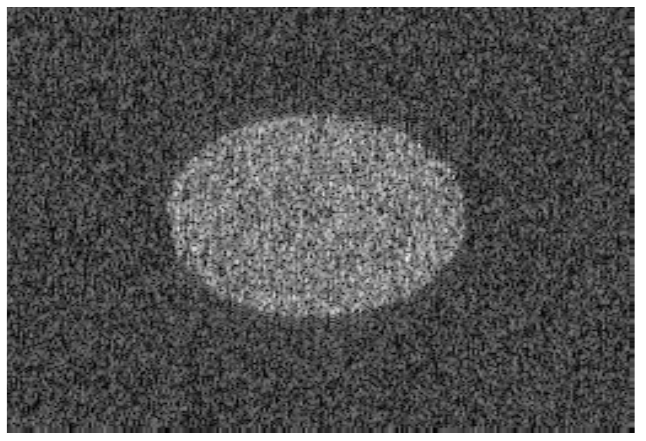}&
\includegraphics[height=0.13\textwidth, width=0.14\textwidth]{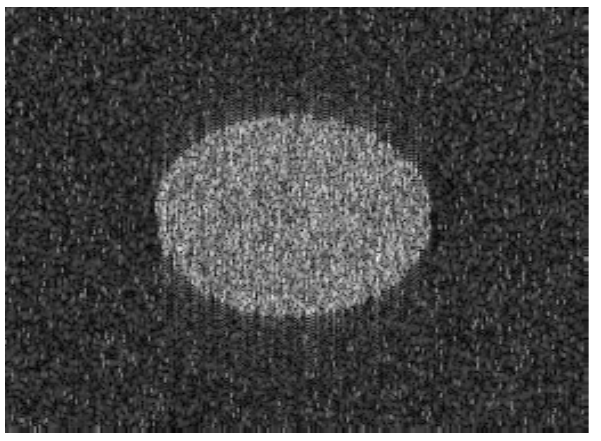}&
\includegraphics[height=0.13\textwidth, width=0.14\textwidth]{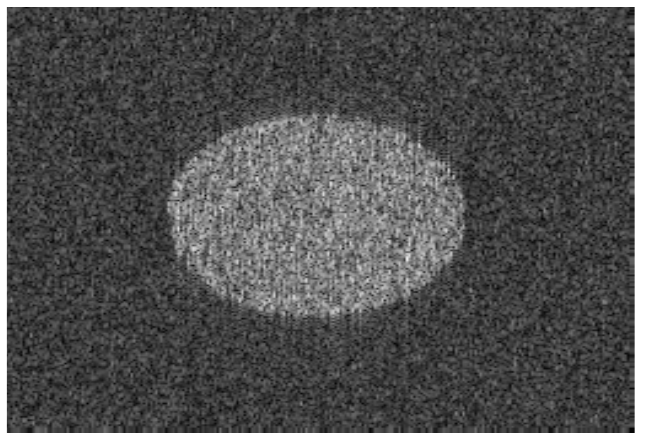}&
\includegraphics[height=0.13\textwidth, width=0.14\textwidth]{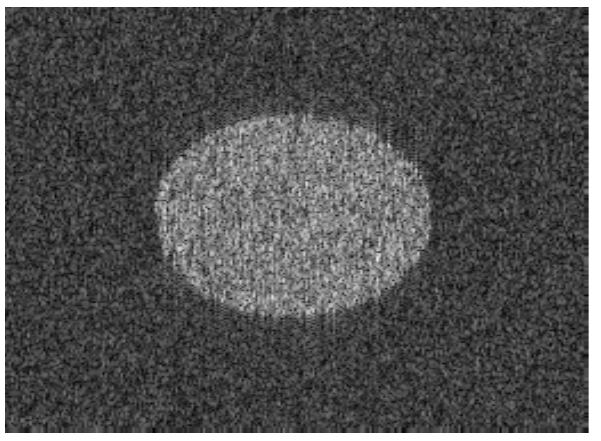}\\[-0.9ex]
\includegraphics[height=0.13\textwidth, width=0.14\textwidth]{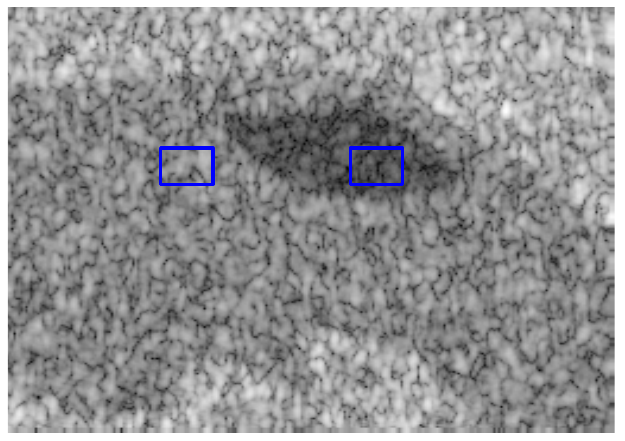}&
\includegraphics[height=0.13\textwidth, width=0.14\textwidth]{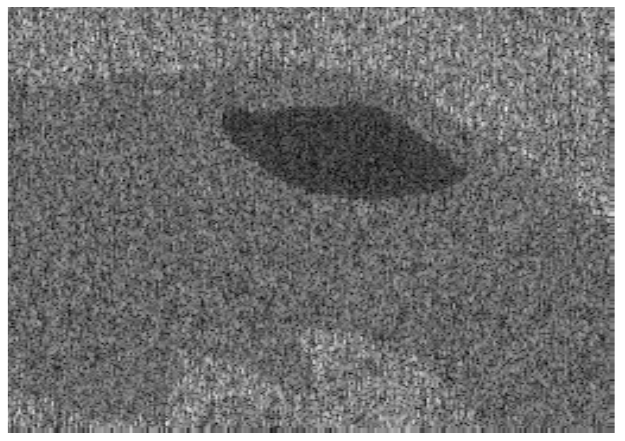}&
\includegraphics[height=0.13\textwidth, width=0.14\textwidth]{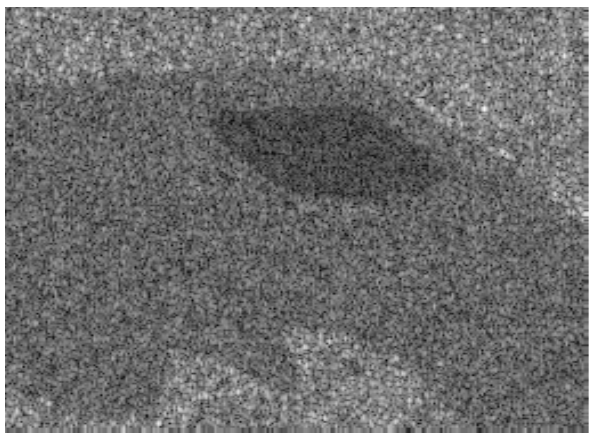}&
\includegraphics[height=0.13\textwidth, width=0.14\textwidth]{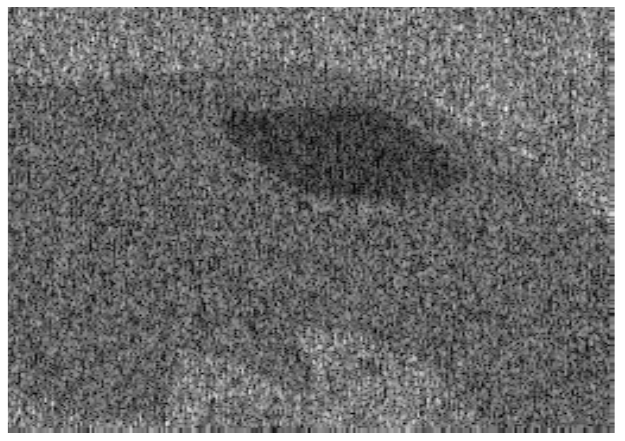}&
\includegraphics[height=0.13\textwidth, width=0.14\textwidth]{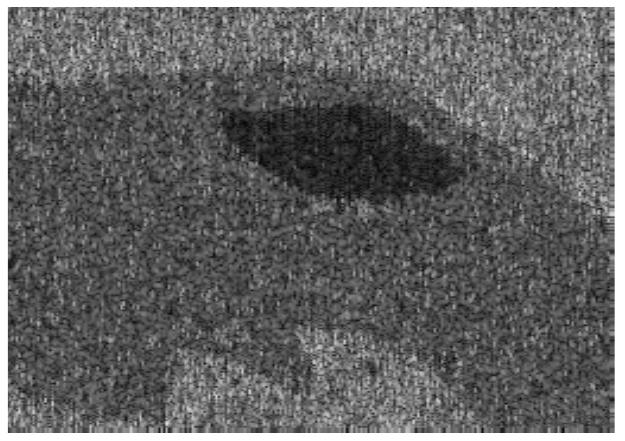}&
\includegraphics[height=0.13\textwidth, width=0.14\textwidth]{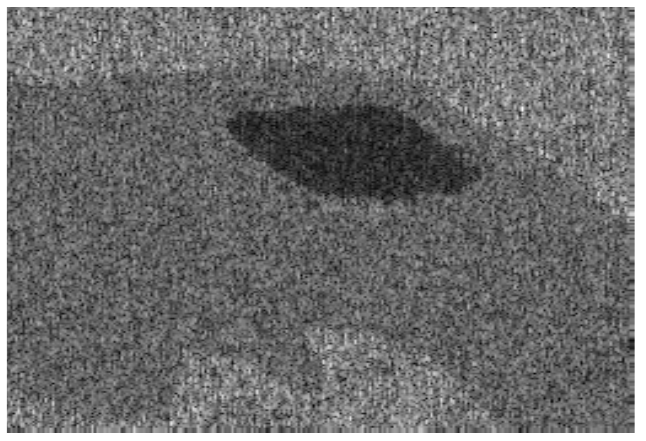}&
\includegraphics[height=0.13\textwidth, width=0.14\textwidth]{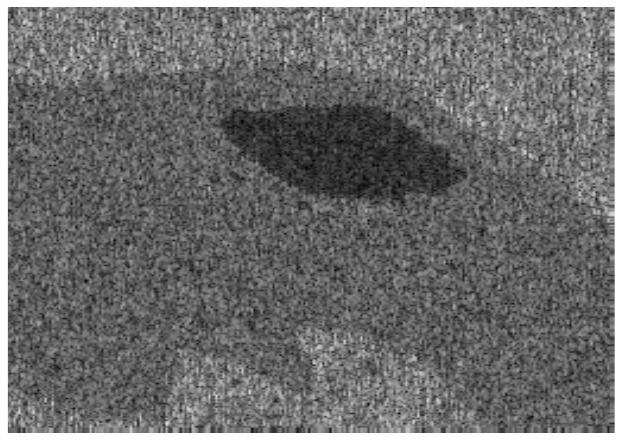}\\[-0.9ex]
\end{tabular}
\caption{B-mode visualization on simulated images. Top: Simu1. Bottom: Simu2. Left to right: RF image, TRF: ground-truth, Wiener, Lasso, P-ULA, HMC, PP-ULA. Blue boxes indicate regions used for the CNR.}
\label{fig:TRF_simu}
\end{figure}

\begin{figure}[!htb]
\centering
\setlength\tabcolsep{0.01pt}
\begin{tabular}{cccccc}
\includegraphics[height=0.13\textwidth, width=0.14\textwidth]{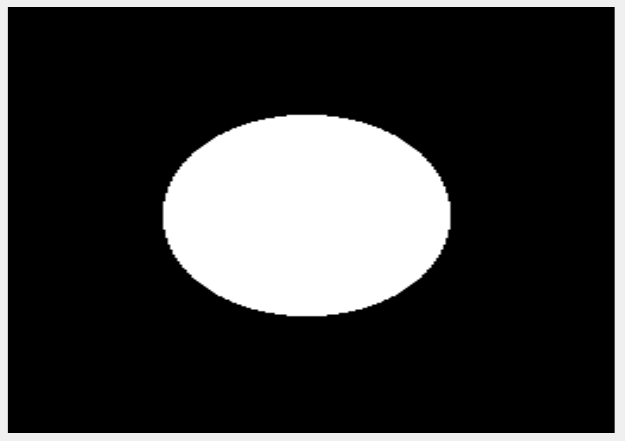}&
\includegraphics[height=0.13\textwidth, width=0.14\textwidth]{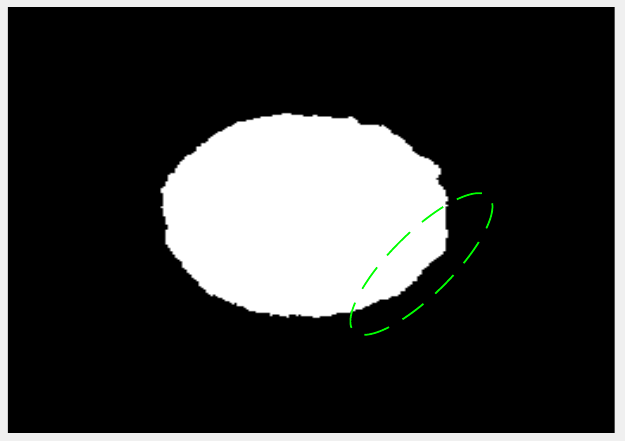}&
\includegraphics[height=0.13\textwidth, width=0.14\textwidth]{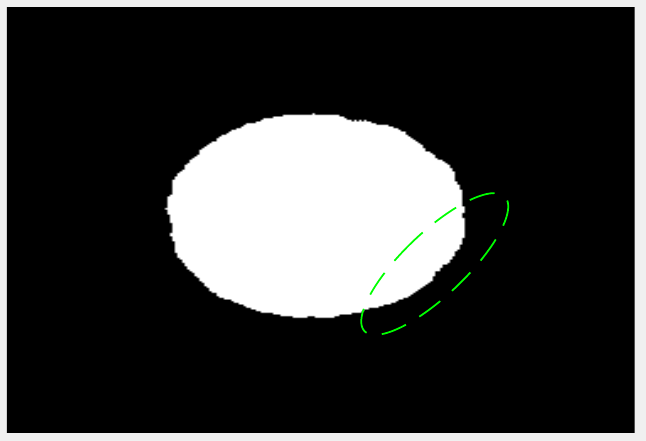}&
\includegraphics[height=0.13\textwidth, width=0.14\textwidth]{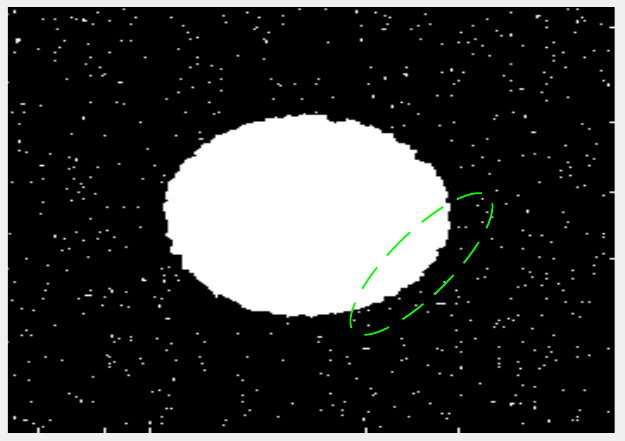}&
\includegraphics[height=0.13\textwidth, width=0.14\textwidth]{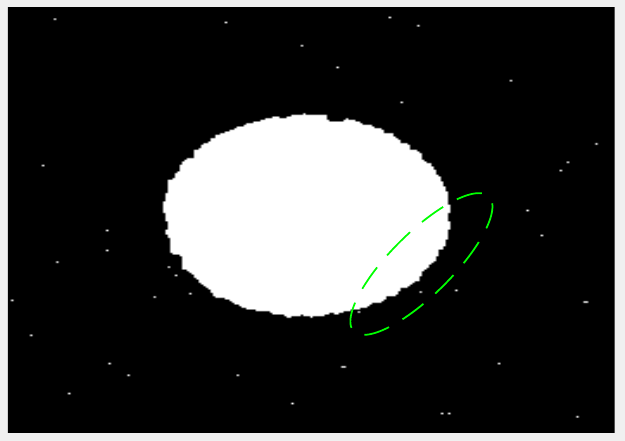}&
\includegraphics[height=0.13\textwidth, width=0.14\textwidth]{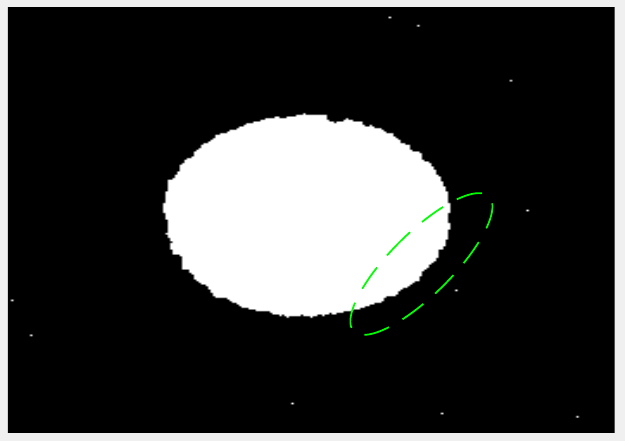}\\
\includegraphics[height=0.13\textwidth, width=0.14\textwidth]{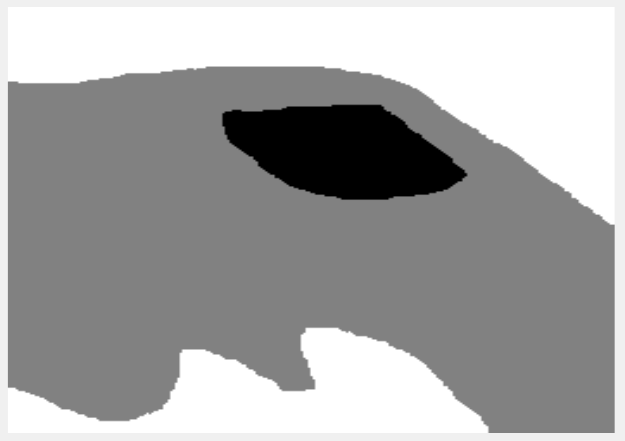}&
\includegraphics[height=0.13\textwidth, width=0.14\textwidth]{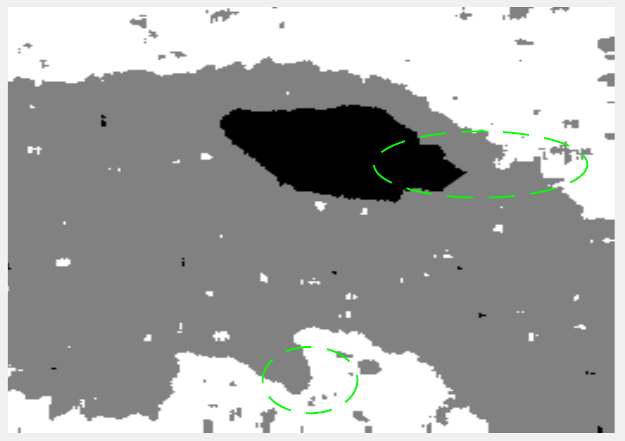}&
\includegraphics[height=0.13\textwidth, width=0.14\textwidth]{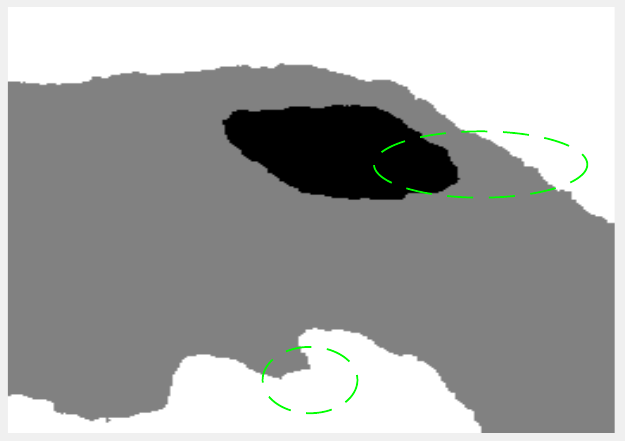}&
\includegraphics[height=0.13\textwidth, width=0.14\textwidth]{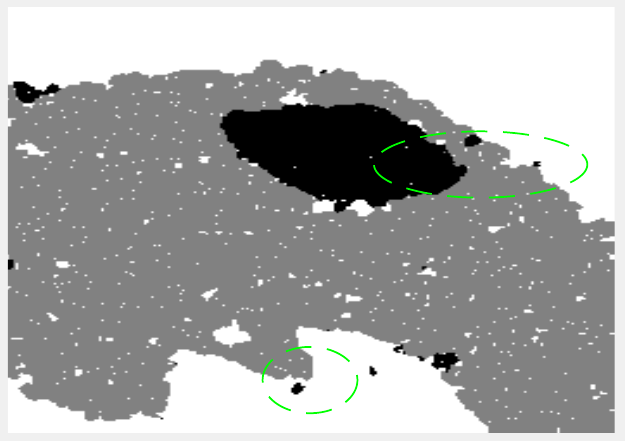}&
\includegraphics[height=0.13\textwidth, width=0.14\textwidth]{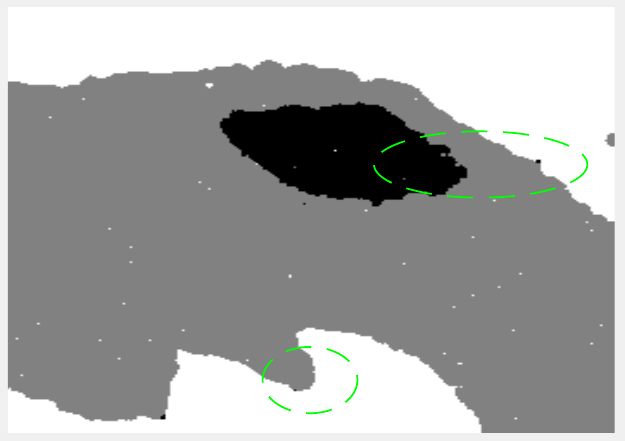}&
\includegraphics[height=0.13\textwidth, width=0.14\textwidth]{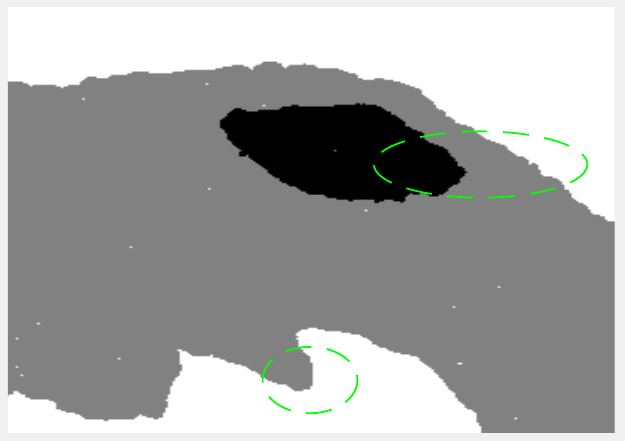}\\
\end{tabular}
\caption{Segmentation. Top: Simu1, bottom: Simu2. Left to right: ground-truth, Otsu, SLaT, P-ULA, HMC, PP-ULA. Main differences are circled in green.}
\label{fig:seg_simu}
\end{figure}

\begin{table}[!htb]
\centering
\setlength\tabcolsep{3pt}
\begin{tabular}{lcrcccccccc}
\toprule
 && && \multicolumn{2}{c}{Iterations} && \multicolumn{2}{c}{Time} && \multicolumn{1}{c}{Mixing property}\\
 \cmidrule{5-6} \cmidrule{8-9} \cmidrule{11-11}
 && && Burn-in & Total && Duration & PP-ULA speed gain && MSJ (per s)\\
 \midrule
 \multirow{2}{*}{Kidney}
 &&HMC   && 7000  & 14000   && 4~h~23~min & 6.3  && 167\\
 &&PP-ULA&& 7000  & 14000   && 42~min     & 1    && 657\\
 \midrule
  \multirow{2}{*}{Thyroid}
 &&HMC   && 3000  & 6000   && 2~h~09~min & 3.7  && 175\\
 &&PP-ULA&& 3000  & 6000   && 35~min     & 1    && 950\\
 \midrule
   \multirow{2}{*}{Bladder}
 &&HMC   && 5000  & 10000   && 2~h~45~min & 5.2  && 13\\
 &&PP-ULA&& 5000  & 10000   && 32~min     & 1    && 1396\\
 \midrule
   \multirow{2}{*}{KidneyReal}
 &&HMC   && 5000  & 10000   && 1~h~49~min & 5.8  && 11\\
 &&PP-ULA&& 5000  & 10000   && 19~min     & 1    && 1361\\
 \bottomrule\\
\end{tabular}
\caption{Number of iterations, computational time and MSJ per s for experiments on the tissue-mimicking phantom and on real data.}
\label{tab:res_others}
\end{table}

\subsection{Results on a tissue-mimicking phantom and on real data}

The convergence of Algorithm~\ref{algo:gibbs} is also empirically observed for the experiments on the tissue-mimicking phantom and on real data, i.e. Kidney, Thyroid, Bladder and KidneyReal. As mentioned in Table~\ref{tab:res_others}, the proposed method leads to a significant acceleration since it is between 3.7 and 6.3 times faster than HMC on these experiments.
Visual results from Fig.~\ref{fig:TRF_real} and CNR values in Table~\ref{tab:psnr_ssim_cnr} show that the contrast obtained with  PP-ULA is better than with competitors on all these test images. 
In addition, the PSNR and SSIM values from Table~\ref{tab:psnr_ssim_cnr} obtained with PP-ULA on the Kidney experiment are equivalent or higher than all competitors.
Although the ground-truth of the segmentation is not available for these experiments, one can see from the visual segmentation results shown in Fig.~\ref{fig:seg_real}, that the segmentation based on the Potts model (PP-ULA and HMC) gives more homogeneous areas than Otsu, and recovers more details than SLaT. 

\begin{figure}[!htb]
\centering
\setlength\tabcolsep{0.01pt}
\begin{tabular}{c}
\begin{tabular}{cccccc}
\includegraphics[height=0.15\textwidth, width=0.165\textwidth]{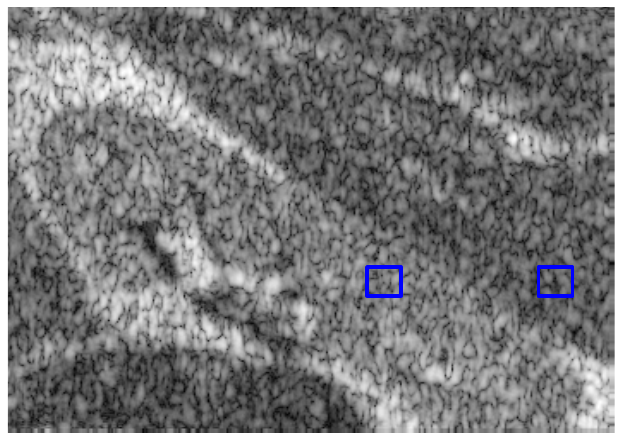}&
\includegraphics[height=0.15\textwidth, width=0.165\textwidth]{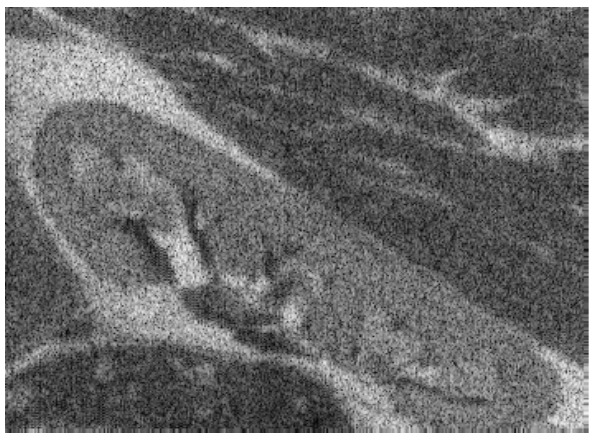}&
\includegraphics[height=0.15\textwidth, width=0.165\textwidth]{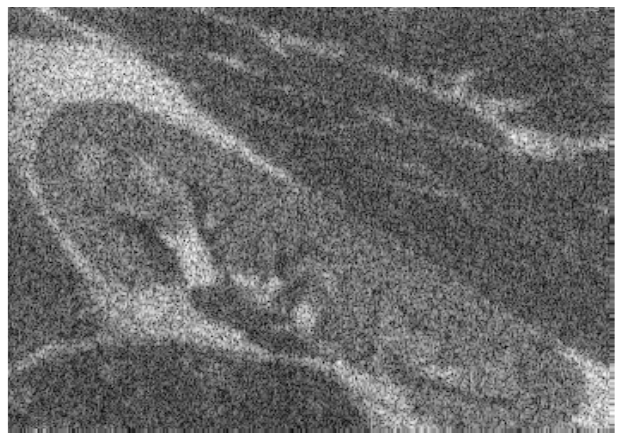}&
\includegraphics[height=0.15\textwidth, width=0.165\textwidth]{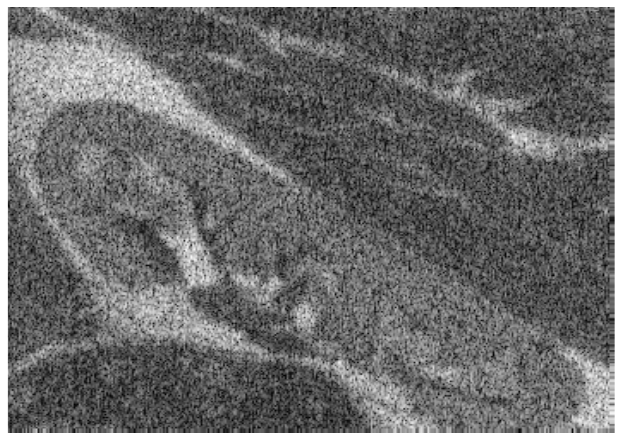}&
\includegraphics[height=0.15\textwidth, width=0.165\textwidth]{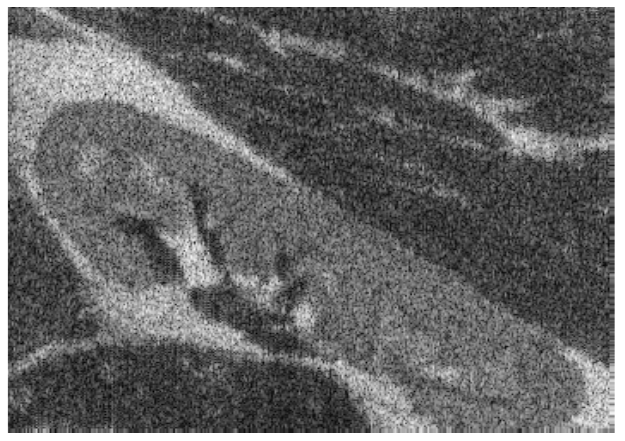}&
\includegraphics[height=0.15\textwidth, width=0.165\textwidth]{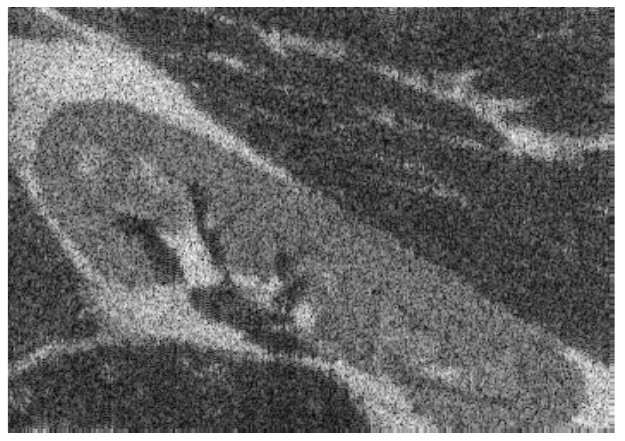}\\
\end{tabular}\\(a)\\
\begin{tabular}{ccccc}
\includegraphics[height=0.16\textwidth, width=0.2\textwidth]{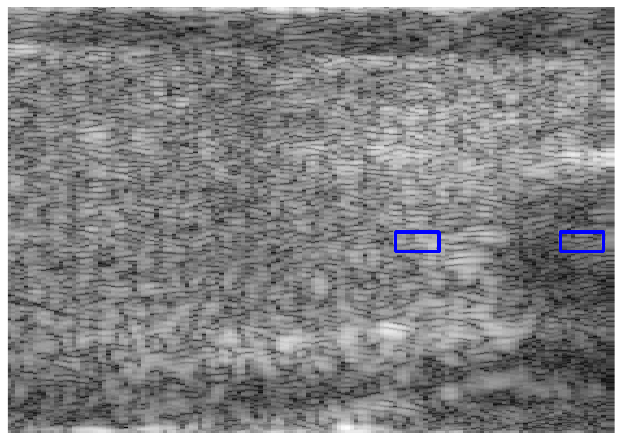}&
\includegraphics[height=0.16\textwidth, width=0.2\textwidth]{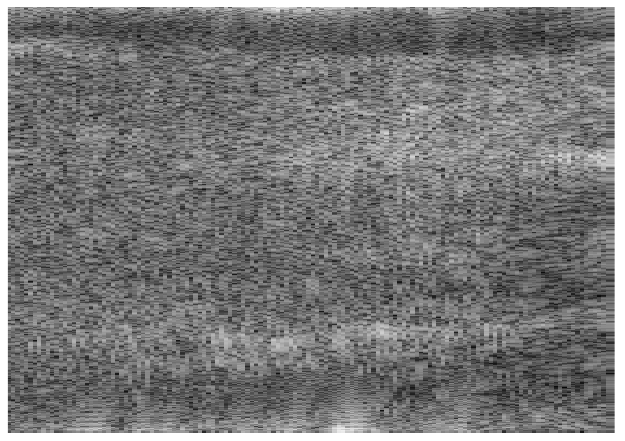}&
\includegraphics[height=0.16\textwidth, width=0.2\textwidth]{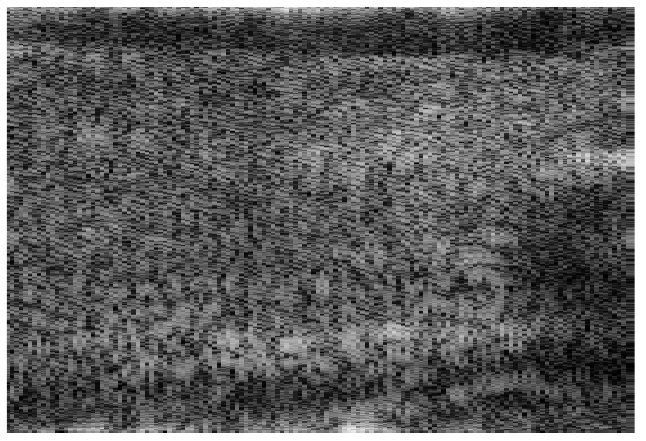}&
\includegraphics[height=0.16\textwidth, width=0.2\textwidth]{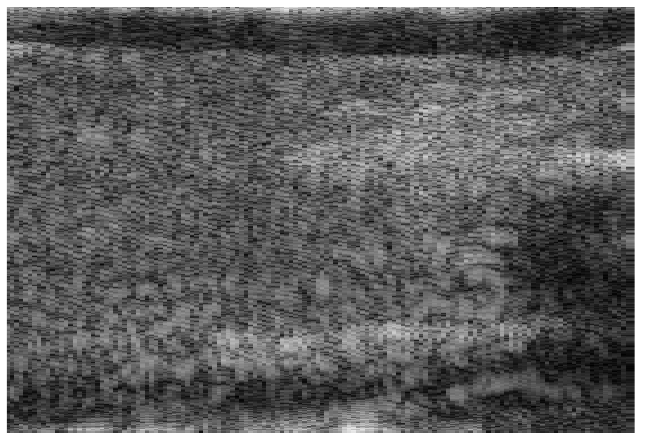}&
\includegraphics[height=0.16\textwidth, width=0.2\textwidth]{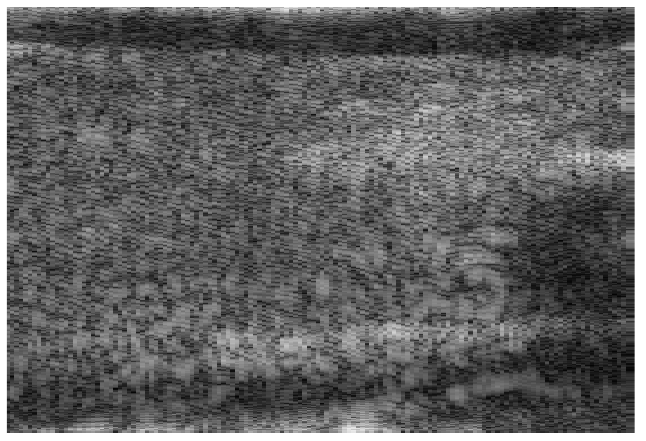}\\[-0.9ex]
\includegraphics[height=0.16\textwidth, width=0.2\textwidth]{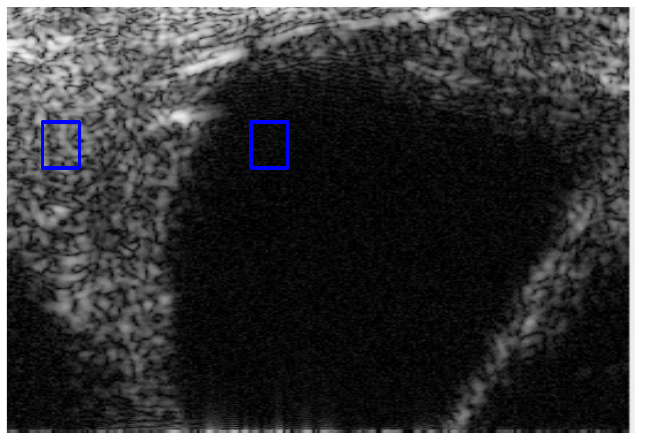}&
\includegraphics[height=0.16\textwidth, width=0.2\textwidth]{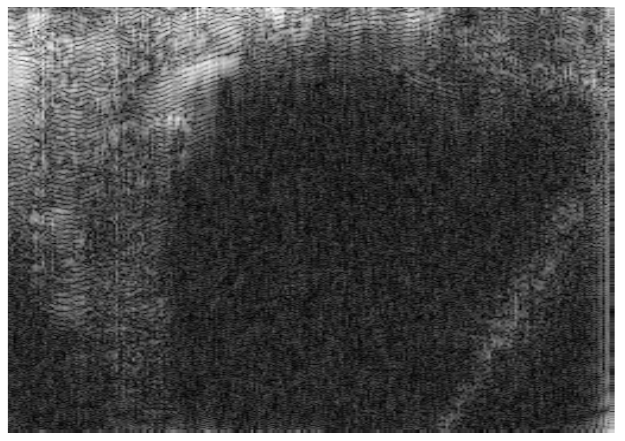}&
\includegraphics[height=0.16\textwidth, width=0.2\textwidth]{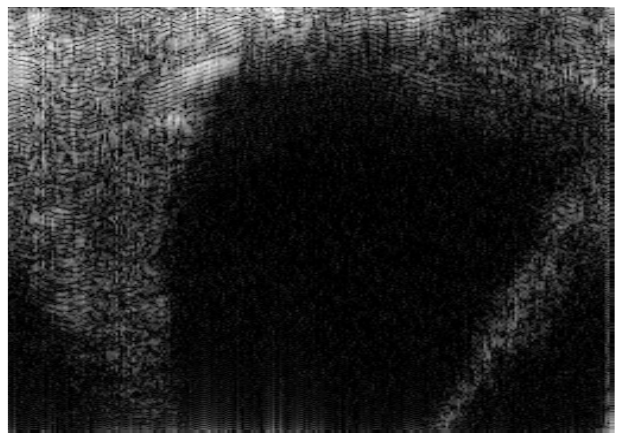}&
\includegraphics[height=0.16\textwidth, width=0.2\textwidth]{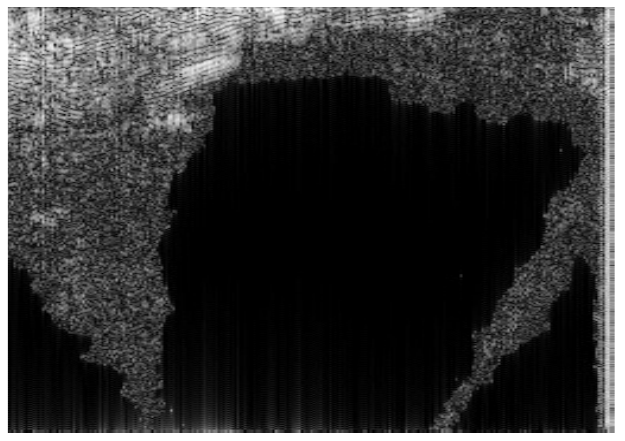}&
\includegraphics[height=0.16\textwidth, width=0.2\textwidth]{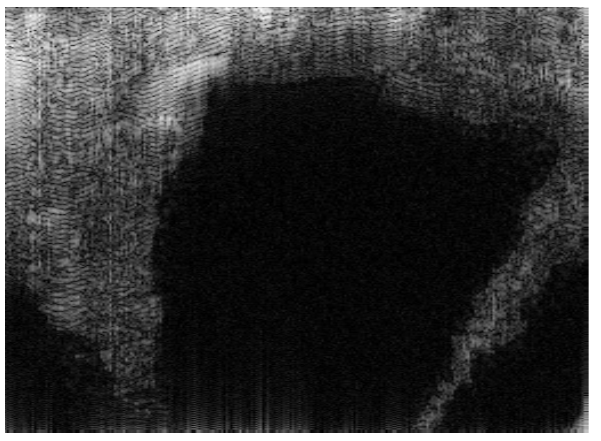}\\[-0.9ex]
\includegraphics[height=0.16\textwidth, width=0.2\textwidth]{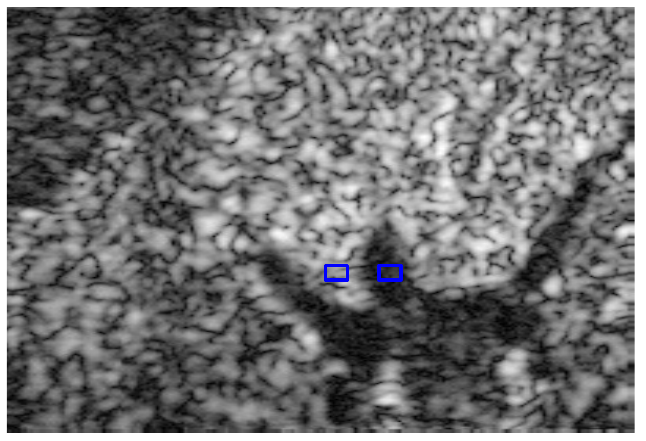}&
\includegraphics[height=0.16\textwidth, width=0.2\textwidth]{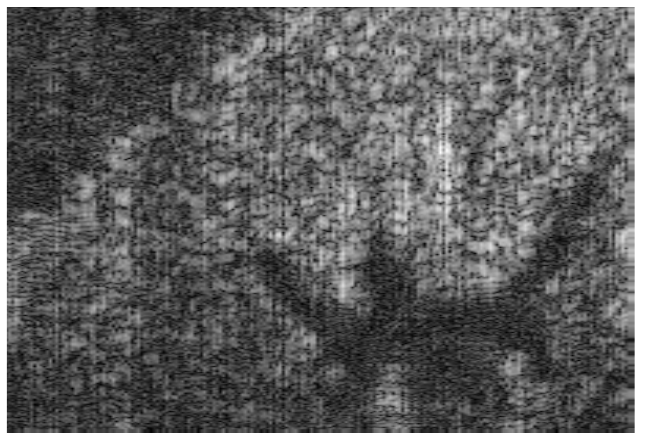}&
\includegraphics[height=0.16\textwidth, width=0.2\textwidth]{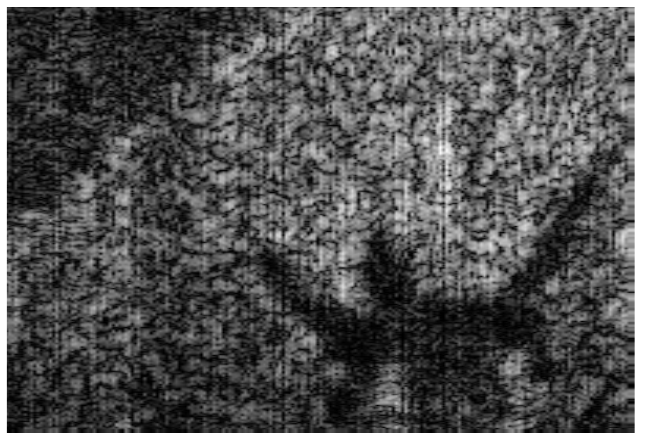}&
\includegraphics[height=0.16\textwidth, width=0.2\textwidth]{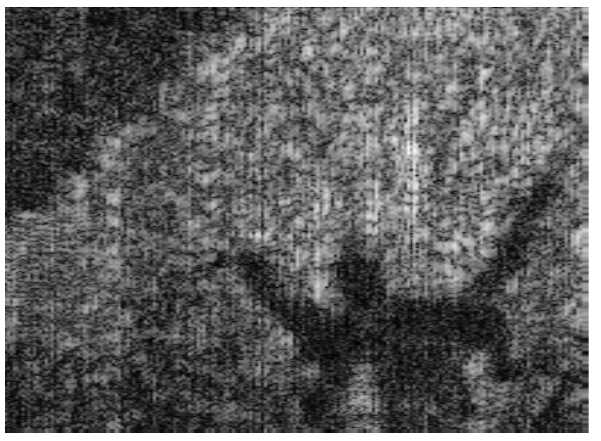}&
\includegraphics[height=0.16\textwidth, width=0.2\textwidth]{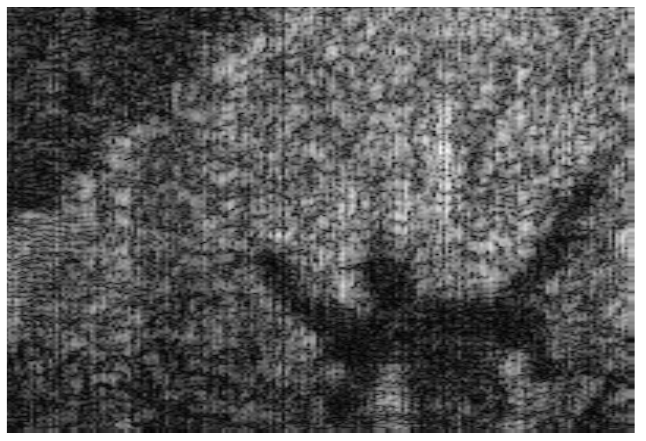}\\
\end{tabular}\\(b)\\
\end{tabular}
\caption{B-mode visualization. (a) Kidney; left to right: RF image, TRF: ground-truth, Wiener, Lasso, HMC, PP-ULA. (b)~Real \textit{in vivo} images; top to bottom: Thyroid, Bladder, KidneyReal; left to right: RF image, TRF: Wiener, Lasso, HMC, PP-ULA. Blue boxes indicate regions used for the CNR.}
\label{fig:TRF_real}
\end{figure}

\begin{figure}[!htb]
\centering
\setlength\tabcolsep{0.01pt}
\begin{tabular}{cccc}
\includegraphics[height=0.16\textwidth, width=0.2\textwidth]{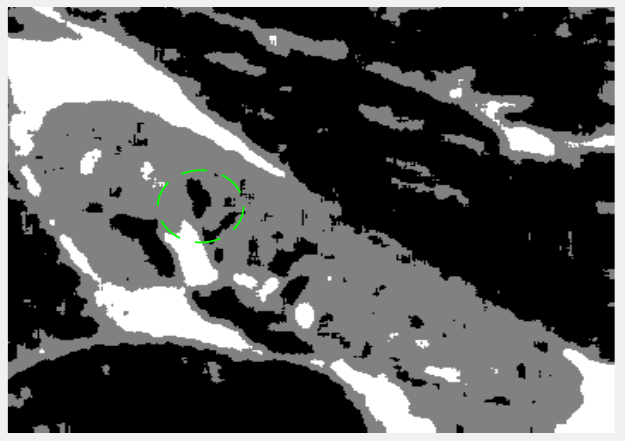}&
\includegraphics[height=0.16\textwidth, width=0.2\textwidth]{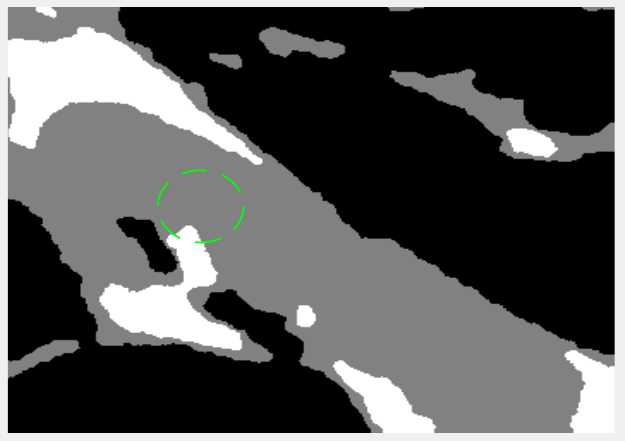}&
\includegraphics[height=0.16\textwidth, width=0.2\textwidth]{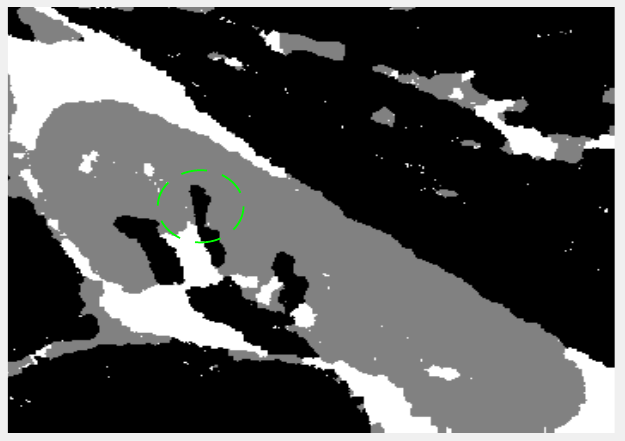}&
\includegraphics[height=0.16\textwidth, width=0.2\textwidth]{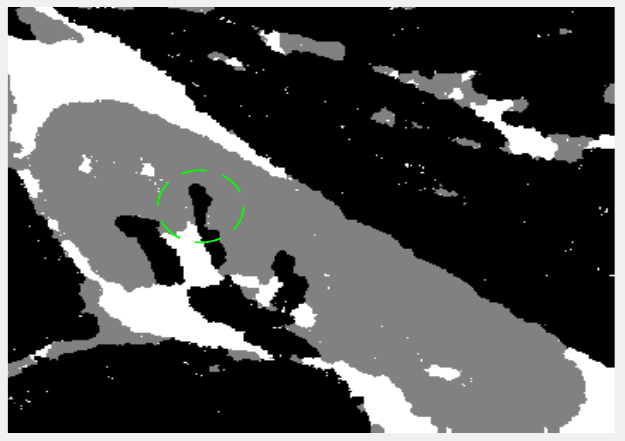}\\
\includegraphics[height=0.16\textwidth, width=0.2\textwidth]{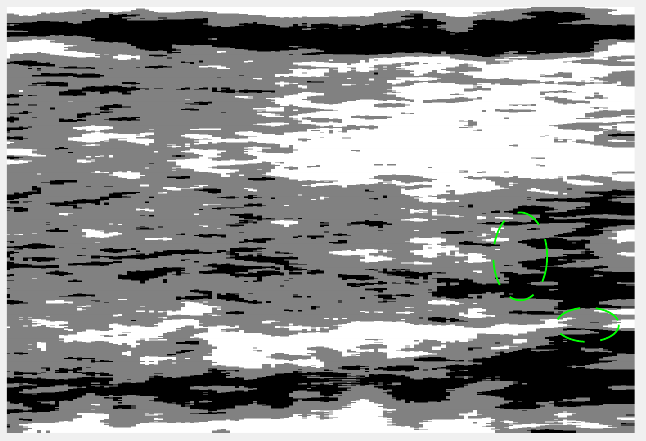}&
\includegraphics[height=0.16\textwidth, width=0.2\textwidth]{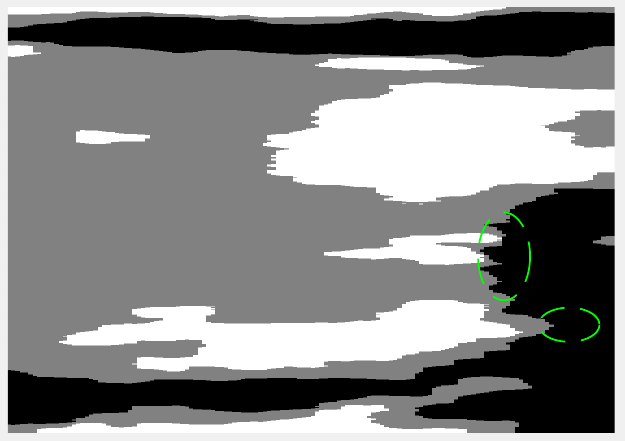}&
\includegraphics[height=0.16\textwidth, width=0.2\textwidth]{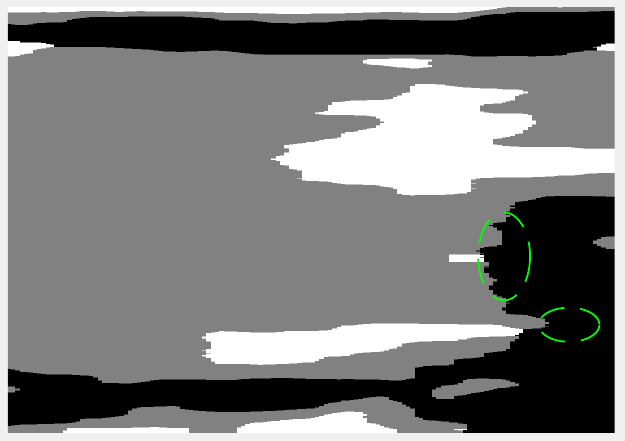}&
\includegraphics[height=0.16\textwidth, width=0.2\textwidth]{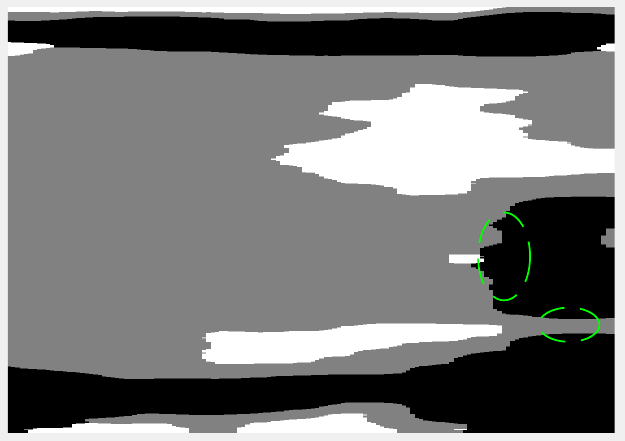}\\
\includegraphics[height=0.16\textwidth, width=0.2\textwidth]{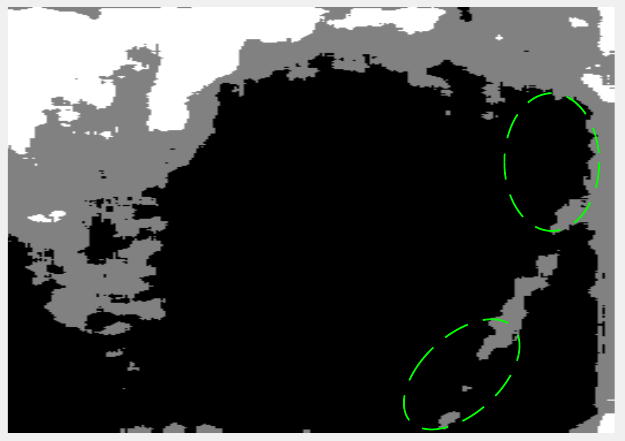}&
\includegraphics[height=0.16\textwidth, width=0.2\textwidth]{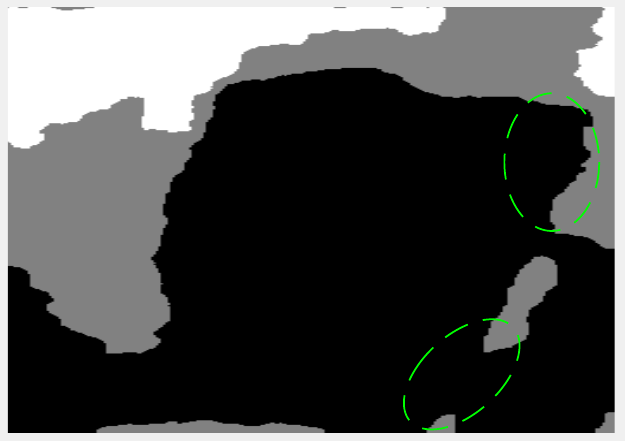}&
\includegraphics[height=0.16\textwidth, width=0.2\textwidth]{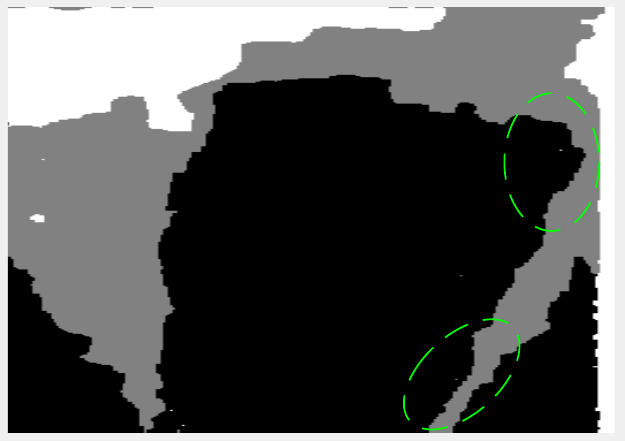}&
\includegraphics[height=0.16\textwidth, width=0.2\textwidth]{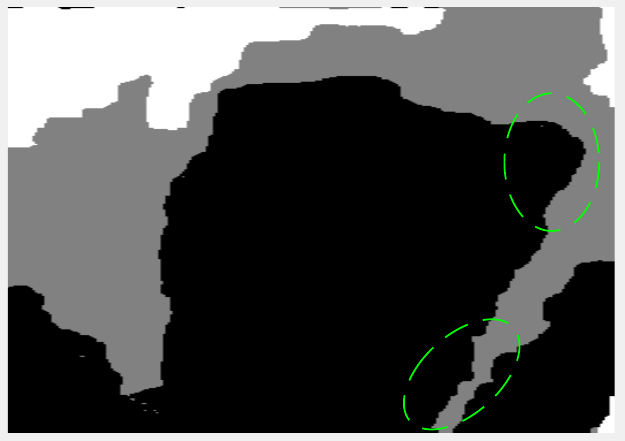}\\
\includegraphics[height=0.16\textwidth, width=0.2\textwidth]{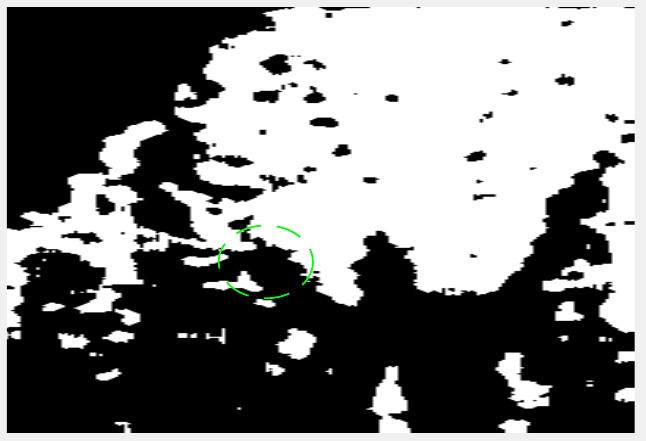}&
\includegraphics[height=0.16\textwidth, width=0.2\textwidth]{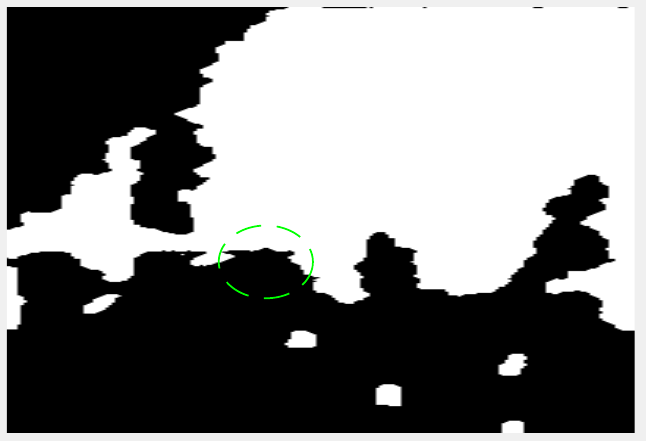}&
\includegraphics[height=0.16\textwidth, width=0.2\textwidth]{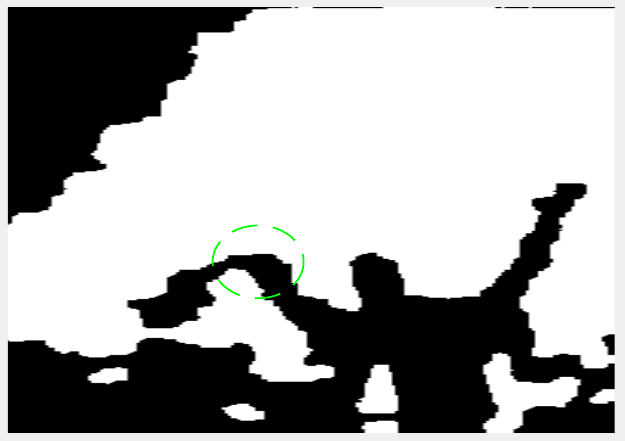}&
\includegraphics[height=0.16\textwidth, width=0.2\textwidth]{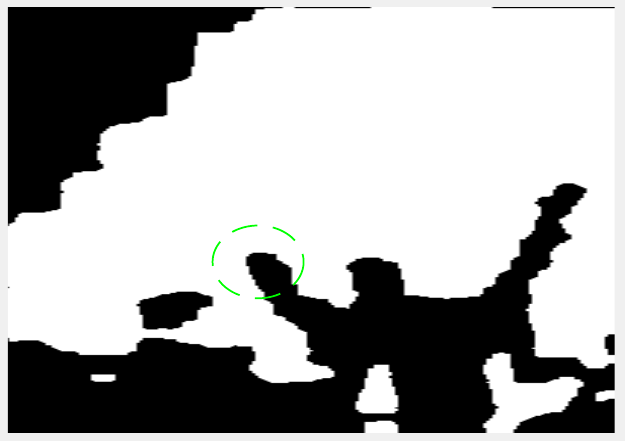}\\
\end{tabular}
\caption{Segmentation. Top to bottom: Kidney, Thyroid, Bladder, KidneyReal. Left to right: Otsu, SLaT, HMC, PP-ULA. Main differences are circled in green.}
\label{fig:seg_real}
\end{figure}

\begin{table}[!htb]
\centering
\setlength\tabcolsep{3pt}
\begin{tabular}{rcccccccccc}
\toprule
 &&  \multicolumn{3}{c}{Kidney} && \multicolumn{1}{c}{Thyroid} && \multicolumn{1}{c}{Bladder} && \multicolumn{1}{c}{KidneyReal}\\
\cmidrule{3-5} \cmidrule{7-7} \cmidrule{9-9} \cmidrule{11-11}
         && PSNR & SSIM & CNR && CNR && CNR && CNR\\
     \midrule    
 Wiener && 27.6 & 0.58 & 0.66 && 0.56 && 1.66 && 1.61\\
 Lasso && 28.5 & 0.59 & 0.67 && 0.99 && 1.76 && 1.76\\
 HMC     && \textbf{29.5} & \textbf{0.62} & 1.10 && 1.52 && 2.23 && 1.88\\
 PP-ULA   &&  \underline{29.3} & \textbf{0.62} & \textbf{1.14} && \textbf{1.56} && \textbf{2.48} && \textbf{1.93}\\
 \bottomrule\\
\end{tabular}
\caption{ PSNR, SSIM and CNR results.}
\label{tab:psnr_ssim_cnr}
\end{table}

\section{Conclusion}
We investigated a new method based on a preconditioned proximal unadjusted Langevin algorithm for the joint restoration and segmentation of ultrasound images, which showed faster convergence than an existing Hamiltonian Monte Carlo algorithm. Hence, the proposed method has the potential to speed-up the approach proposed in \cite{pereyra2012segmentation} for the segmentation of ultrasound images.
Another direction for future work is to extend this framework to a spatially variant, possibly unknown, PSF. 


\section*{Appendix}
\label{sec:annex}
In this section, after reminding results about the Langevin diffusion and its discretization using Euler's scheme, we provide details about the derivation of the proposed method \eqref{eq:p-ula_it} used to sample the TRF.  

\subsection{Discrete Langevin diffusion}
An $n$-dimensional Langevin diffusion is a continuous time Markov process $(x(t))_{t\in[0,+\infty[}$ taking its values in $\mathbb{R}^n$, which is the solution to the following stochastic differential equation \cite{roberts2002langevin},
\begin{equation}
(\forall t\in[0,+\infty[)~~\mathrm{d}x(t)=b(x(t))\mathrm{d}t+V(x(t))\mathrm{d}B(t),
\label{eq:langevin}
\end{equation}
where $(B(t))_{t\in[0,+\infty[}$ is a Brownian motion with values in $\mathbb{R}^n$, and for every $x\in\mathbb{R}^n$, $V(x)\in\mathbb{R}^{n\times n}$ is the volatility matrix and $b(x)=(b_i(x))_{1\leq i\leq n} \in \mathbb{R}^n$ is the drift term defined as
\begin{equation}
(\forall i \in\{1,\ldots,n\})~~~b_i(x)=\frac{1}{2}\sum_{j=1}^nA_{i,j}(x)\frac{\partial \log \pi(x)}{\partial x_j}+\mathrm{det}(A(x))^{\frac{1}{2}}\sum_{k=1}^n\frac{\partial}{\partial x_j}\left(A_{i,k}(x)\mathrm{det}(A(x))^{-\frac{1}{2}}\right),
\end{equation}
where $A(x)=V(x)V(x)^\top=(A_{i,j}(x))_{1\leq i,j\leq n}$ is a symmetric positive definite matrix, $\mathrm{det}(A(x))$ denotes its determinant, and $\pi$ is the density of the stationary distribution of the diffusion. Here, we take $(\forall x\in\mathbb{R}^n)$ $\pi(x)=p(x|y,\sigma^2,\alpha,\beta,z)$ defined in \eqref{eq:TRF_distri}. Euler's discretization scheme applied to \eqref{eq:langevin} leads to the following target posterior distribution, which can be used to generate a Langevin Markov chain.
\begin{equation}
(\forall t\in\mathbb{N})~~x^{(t+1)}=x^{(t)}+2\gamma b(x^{(t)})+\sqrt{2\gamma}A^{\frac{1}{2}}(x^{(t)})\omega^{(t)}.
\end{equation}
Hereabove, $\omega^{(t)}\sim\mathcal{N}(0,\mathbb{I}_n) $ and $\gamma>0$ is the discretization stepsize that controls the length of the jumps, while the scale matrix $A(\cdot)$ drives their direction. Instead of taking $A(\cdot)=\mathbb{I}_n$ as in the standard Metropolis adjusted Langevin algorithm, we follow \cite{marnissi2018majorize,stuart2004conditional} and use a preconditioning matrix $A$ to accelerate the Langevin scheme, which leads to 
\begin{equation}
x^{(t+1)}=x^{(t)}+\gamma A\nabla\log \pi(x)+\sqrt{2\gamma}A^{\frac{1}{2}}\omega^{(t+1)}.\label{eq:langevin_pi_A}
\end{equation}


\subsection{Approximation of the target diffusion}
For every $x\in\mathbb{R}^n$, let $f(x)=\|y-Hx\|^2/(2\sigma^2)$. From \eqref{eq:TRF_distri}, the target distribution statisfies the following relation,
\begin{equation}
(\forall x\in\mathbb{R}^n)~~\pi(x)=p(x|y,\sigma^2,\alpha,\beta,z)\propto \exp(-(f+g)(x)),
\end{equation}
where $(\forall x\in\mathbb{R}^n)$ $g(x)=\sum_{i=1}^n\beta_{z_i}^{-1}|x_i|^{\alpha_{z_i}}$.
Let $\gamma>0$ and $Q\in\mathcal{S}_n$. Following~\cite{pereyra2016proximal}, we replace $\pi$ by its Moreau approximation $\pi_\gamma^Q$ defined in \eqref{eq:approx_target} and recalled below, 
\begin{equation}
(\forall x\in\mathbb{R}^n)~~\pi_\gamma^Q(x)=\sup_{u\in\mathbb{R}^n}\pi(u)\exp\left(-\frac{\|u-x\|_{Q^{-1}}^2}{2\gamma}\right).
\end{equation}
Note that we dropped the normalization constant. Moreover, at the difference of~\cite{pereyra2016proximal}, we introduce the preconditioning matrix $Q$ for convergence acceleration purposes. When $Q$ is not specified, the identity matrix is used, \textit{i.e.} $Q=\mathbb{I}_n$. Hence, the approximated version of \eqref{eq:langevin_pi_A} reads
\begin{equation}
x^{(t+1)}=x^{(t)}+\gamma A\nabla\log \pi_\gamma^Q(x)+\sqrt{2\gamma}A^{\frac{1}{2}}\omega^{(t+1)}.\label{eq:langevin_pi_A_Q}
\end{equation}
We can then deduce the following result when $g$ is convex.

\begin{proposition} 
For every $\gamma>0$, $Q\in\mathcal{S}_n$ and $x\in\mathbb{R}^n$, if $(\forall k\in\{1,\ldots,K\})$ $\alpha_{k}\geq 1$, then we have
\begin{equation}
\nabla \log\pi_\gamma^Q(x)=Q^{-1}\frac{\mathrm{prox}_{\gamma (f+g)}^{Q}(x)-x}{\gamma}.
\end{equation}
\label{prop:nalbapi}
\end{proposition}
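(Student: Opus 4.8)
The plan is to recognize $\log \pi_\gamma^Q$ as, up to a factor $-1/\gamma$, a Moreau envelope of $f+g$ computed in the metric induced by $Q^{-1}$, and then to differentiate this envelope. First I would discard the proportionality constant in \eqref{eq:TRF_distri} and take the logarithm of \eqref{eq:approx_target}. Because $\exp$ is increasing, the supremum commutes with the logarithm and the expression becomes
\[
\log \pi_\gamma^Q(x) = -\min_{u\in\mathbb{R}^n}\left((f+g)(u)+\frac{\|u-x\|_{Q^{-1}}^2}{2\gamma}\right) = -\frac{1}{\gamma}\,M(x),
\]
where $M(x)=\min_{u\in\mathbb{R}^n}\big(\gamma(f+g)(u)+\tfrac12\|u-x\|_{Q^{-1}}^2\big)$ and, by \eqref{eq:proxQ}, the minimizer is exactly $\mathrm{prox}_{\gamma(f+g)}^Q(x)$.

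Next I would check that this proximity operator is well defined. Under the assumption $\alpha_k\geq 1$ for all $k$, each term $|x_i|^{\alpha_{z_i}}$ is convex, so $g$ is convex; together with the convex quadratic $f$, the function $f+g$ is proper, lower semicontinuous and convex. Adding the strongly convex and coercive proximal term $\tfrac{1}{2\gamma}\|u-x\|_{Q^{-1}}^2$ makes the inner objective strongly convex and coercive, hence its minimizer exists and is unique, so $\mathrm{prox}_{\gamma(f+g)}^Q$ is single-valued. This convexity, and therefore the hypothesis $\alpha_k\geq 1$, is precisely what is needed here.

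To differentiate $M$, I would transport the problem to the Euclidean metric through the change of variables $u=Q^{1/2}\tilde u$ and $x=Q^{1/2}\tilde x$, which sends $\|u-x\|_{Q^{-1}}^2$ to $\|\tilde u-\tilde x\|^2$. Setting $\tilde\phi=\gamma(f+g)\circ Q^{1/2}$, the envelope becomes the standard Moreau envelope $\tilde M(\tilde x)=\min_{\tilde u}\tfrac12\|\tilde x-\tilde u\|^2+\tilde\phi(\tilde u)$ of a convex function, whose gradient is the classical $\nabla\tilde M(\tilde x)=\tilde x-\mathrm{prox}_{\tilde\phi}(\tilde x)$. Applying the chain rule $\nabla M(x)=Q^{-1/2}\nabla\tilde M(Q^{-1/2}x)$ together with the identity $\mathrm{prox}_{\tilde\phi}(Q^{-1/2}x)=Q^{-1/2}\,\mathrm{prox}_{\gamma(f+g)}^Q(x)$ gives $\nabla M(x)=Q^{-1}\big(x-\mathrm{prox}_{\gamma(f+g)}^Q(x)\big)$. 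Multiplying by $-1/\gamma$ then yields the claimed expression for $\nabla\log\pi_\gamma^Q(x)$.

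I expect the main difficulty to lie not in the algebra but in rigorously justifying the differentiability of the Moreau envelope and the gradient formula in the preconditioned metric. This rests on the convexity of $f+g$ (the role of $\alpha_k\geq 1$) to guarantee a unique proximal point, and on the well-known smoothness of the Moreau envelope of a convex function; the change-of-variables reduction to the Euclidean setting is the most economical way to import the standard differentiability result into the $Q^{-1}$ metric.
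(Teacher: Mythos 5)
Your proof is correct and takes essentially the same approach as the paper: both identify $\log\pi_\gamma^Q$ (up to the factor $-1/\gamma$) as a Moreau envelope of the convex function $f+g$ in the metric induced by $Q^{-1}$, and then conclude by the standard gradient formula for the Moreau envelope (identity minus proximity operator). The only difference is one of detail: the paper invokes \cite[Lemma~2.5]{combettes2005signal} directly ``in the metric induced by $Q^{-1}$'', whereas you justify that metric version explicitly via the change of variables $u=Q^{1/2}\tilde u$, which makes the argument more self-contained but is not a different route.
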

\begin{proof}
By definition of $\pi^{Q}_\gamma$, we have
\begin{equation}
(\forall x\in\mathbb{R}^n)~~\log\pi^{Q}_\gamma(x)=-{\color{black} (f+g)\left(\mathrm{prox}_{\gamma(f+g)}^{Q}(x)\right)-\frac{1}{2\gamma}\left\|\mathrm{prox}_{\gamma(f+g)}^{Q}(x)-x\right\|^2_{Q^{-1}}}.
\end{equation}
Hence, applying \cite[Lemma~2.5]{combettes2005signal} in the metric induced by $Q^{ -1}$ directly leads to the result.
\end{proof}

From Proposition~\ref{prop:nalbapi}, \eqref{eq:langevin_pi_A_Q} becomes
\begin{equation}
x^{(t+1)}=x^{(t)}+\gamma AQ^{-1}\frac{\mathrm{prox}_{\gamma (f+g)}^Q(x^{(t)})-x^{(t)}}{\gamma}+\sqrt{2\gamma}A^{\frac{1}{2}}\omega^{(t+1)}.
\label{eq:langevin_pi_A_Q_prox}
\end{equation}
It can be noted that, in Proposition~\ref{prop:nalbapi}, $g$ is assumed to be convex, which is not necessarily satified in our case. However, for simplicity, we take the discrete scheme \eqref{eq:langevin_pi_A_Q_prox} even in the nonconvex case. Finally, we take $A=Q$, which leads to
\begin{equation}
x^{(t+1)}=\mathrm{prox}_{\gamma (f+g)}^Q(x^{(t)})+\sqrt{2\gamma}Q^{\frac{1}{2}}\omega^{(t+1)}.
\label{eq:langevin_pi_Q}
\end{equation}


\subsection{Forward-backward approximation}
By definition, $f$ is differentiable on $\mathbb{R}^n$ and its gradient $\nabla f=H^\top(H\cdot-y)/\sigma^2$ is Lipschitz-continuous on $\mathbb{R}^n$. It is worth noting that the computation of the proximity operator of the sum of two functions is generally intractable \cite{pustelnik2017proximity}. Hence, as suggested in \cite{pereyra2016proximal}, we use a first-order Taylor expansion to approximate the proximity operator of $f+g$ and introduce a forward step in PP-ULA iteration. Let $o$ denotes Landau's notation.\footnote{Following Landau’s notation, we will write that $F(u) =o(\|u-x\|)$, where $F:\mathbb{R}^n\rightarrow \mathbb{R}$ and $x\in\mathbb{R}^n$, if $F(u)/\|u-x\|\rightarrow 0$ as $u\rightarrow x$.} Let $x\in\mathbb{R}^n$, using $(\forall u\in \mathbb{R}^n)$ $f(u)=f(x)+(u-x)^\top\nabla f(x)+o(\|u-x\|)$, we have
\begin{equation}
(f+g)(u) + \frac{1}{2\gamma}\|u-x\|^2_{Q^{-1}}=f(x)+ g(u)+ \frac{1}{2\gamma}\|u-x\|^2_{Q^{-1}} +(u-x)^\top \nabla f(x)+ o(\|u-x\|),
  \label{eq:Taylor_1}
\end{equation}
which can be re-written as
\begin{equation}
(f+g)(u) + \frac{1}{2\gamma}\|u-x\|^2_{Q^{-1}}=f(x)+g(u)+\frac{1}{2\gamma}\|u-x+\gamma Q \nabla f (x)\|^2_{Q^{-1}} -\frac{\gamma}{2}\|Q^{\frac{1}{2}}\nabla f(x)\|^2 + o(\|u-x\|).
\end{equation}
Hence, the proximity operator of $f+g$ can be expressed as follows,
\begin{align}
\mathrm{prox}_{\gamma (f+g)}^Q(x)&=\argmin_{u\in\mathbb{R}^n} \left((f+g)(u) + \frac{1}{2\gamma}\|u-x\|^2_{Q^{-1}}\right)\\
&=\argmin_{u\in\mathbb{R}^n} \left(g(u)+\frac{1}{2\gamma}\|u-x+\gamma Q \nabla f (x)\|^2_{Q^{-1}} + o(\|u-x\|)\right).
\end{align}
In addition, we have 
\begin{equation}
\mathrm{prox}_{\gamma g}^Q(x-\gamma Q \nabla f (x))=\argmin_{u\in\mathbb{R}^n} \left( g(u)+\frac{1}{2\gamma}\|u-x+\gamma Q \nabla f (x)\|^2_{Q^{-1}}\right).
\end{equation}
Therefore, when $\gamma$ is small, $\mathrm{prox}_{\gamma g}^Q(x-\gamma Q \nabla f (x))$ is a good approximation of $\mathrm{prox}_{\gamma (f+g)}^Q(x)$. Plugging this preconditioned forward-backward scheme \cite{combettes2011proximal} in \eqref{eq:langevin_pi_Q} leads to the proposed sampling method
\begin{equation}
x^{(t+1)}=\mathrm{prox}_{{\color{black}\gamma} g}^Q(x^{(t)}-{\color{black}\gamma} Q \nabla f (x^{(t)}))+\sqrt{2\gamma}Q^{\frac{1}{2}}\omega^{(t+1)}.
\end{equation}

\IEEEtriggeratref{37}
\bibliographystyle{IEEEbib}
\bibliography{refs}

\begin{thebibliography}{10}

\bibitem{pereyra2012segmentation}
M.~A. Pereyra, N.~Dobigeon, H.~Batatia, and J.-Y. Tourneret,
\newblock ``Segmentation of skin lesions in 2{D} and 3{D} ultrasound images
  using a spatially coherent generalized {R}ayleigh mixture model,''
\newblock {\em IEEE Transactions on Medical Imaging}, vol. 31, no. 8, pp.
  1509--1520, 2012.

\bibitem{bernard2006statistics}
O.~Bernard, J.~D'hooge, and D.~Friboulet,
\newblock ``Statistics of the radio-frequency signal based on {K} distribution
  with application to echocardiography,''
\newblock {\em IEEE Transactions on Ultrasonics, Ferroelectrics, and Frequency
  Control}, vol. 53, no. 9, pp. 1689--1694, 2006.

\bibitem{alessandrini2011restoration}
M.~Alessandrini, S.~Maggio, J.~Por{\'e}e, L.~De~Marchi, N.~Speciale,
  E.~Franceschini, O.~Bernard, and O.~Basset,
\newblock ``A restoration framework for ultrasonic tissue characterization,''
\newblock {\em IEEE Transactions on Ultrasonics, Ferroelectrics, and Frequency
  Control}, vol. 58, no. 11, 2011.

\bibitem{jensen1993deconvolution}
J.~A. Jensen, J.~Mathorne, T.~Gravesen, and B.~Stage,
\newblock ``Deconvolution of in-vivo ultrasound {B}-mode images,''
\newblock {\em Ultrasonic Imaging}, vol. 15, no. 2, pp. 122--133, 1993.

\bibitem{ng2006modeling}
J.~Ng, R.~Prager, N.~Kingsbury, G.~Treece, and A.~Gee,
\newblock ``Modeling ultrasound imaging as a linear, shift-variant system,''
\newblock {\em IEEE Transactions on Ultrasonics, Ferroelectrics, and Frequency
  Control}, vol. 53, no. 3, pp. 549--563, 2006.

\bibitem{zhao2016joint}
N.~Zhao, A.~Basarab, D.~Kouam{\'e}, and J.-Y. Tourneret,
\newblock ``Joint segmentation and deconvolution of ultrasound images using a
  hierarchical {B}ayesian model based on generalized {G}aussian priors,''
\newblock {\em IEEE Transactions on Image Processing}, vol. 25, no. 8, pp.
  3736--3750, 2016.

\bibitem{jensen1992deconvolution}
J.~A. Jensen,
\newblock ``Deconvolution of ultrasound images,''
\newblock {\em Ultrasonic imaging}, vol. 14, no. 1, pp. 1--15, 1992.

\bibitem{michailovich2007blind}
O.~Michailovich and A.~Tannenbaum,
\newblock ``Blind deconvolution of medical ultrasound images: A parametric
  inverse filtering approach,''
\newblock {\em IEEE Transactions on Image Processing}, vol. 16, no. 12, pp.
  3005--3019, 2007.

\bibitem{ayasso2010joint}
H.~Ayasso and A.~Mohammad-Djafari,
\newblock ``Joint {NDT} image restoration and segmentation using
  {G}auss--{M}arkov--{P}otts prior models and variational {B}ayesian
  computation,''
\newblock {\em IEEE Transactions on Image Processing}, vol. 19, no. 9, pp.
  2265--2277, 2010.

\bibitem{pirayre2017hogmep}
A.~Pirayre, Y.~Zheng, L.~Duval, and J.-C. Pesquet,
\newblock ``{HOGMep}: {V}ariational {B}ayes and higher-order graphical models
  applied to joint image recovery and segmentation,''
\newblock in {\em 2017 IEEE International Conference on Image Processing
  (ICIP)}, 2017, pp. 3775--3779.

\bibitem{pereyra2016survey}
M.~Pereyra, P.~Schniter, E.~Chouzenoux, J.-C. Pesquet, J.-Y. Tourneret, A.~O.
  Hero, and S.~McLaughlin,
\newblock ``A survey of stochastic simulation and optimization methods in
  signal processing,''
\newblock {\em IEEE Journal of Selected Topics in Signal Processing}, vol. 10,
  no. 2, pp. 224--241, 2016.

\bibitem{neal2011mcmc}
R.~M. Neal,
\newblock ``{MCMC} using {H}amiltonian dynamics,''
\newblock {\em Handbook of Markov Chain Monte Carlo}, vol. 2, no. 11, pp. 2,
  2011.

\bibitem{robert2018accelerating}
C.~P. Robert, V.~Elvira, N.~Tawn, and C.~Wu,
\newblock ``Accelerating {MCMC} algorithms,''
\newblock {\em Wiley Interdisciplinary Reviews: Computational Statistics}, vol.
  10, no. 5, 2018.

\bibitem{durmus2018efficient}
A.~Durmus, E.~Moulines, and M.~Pereyra,
\newblock ``Efficient {B}ayesian computation by proximal {M}arkov chain {M}onte
  {C}arlo: when {L}angevin meets {M}oreau,''
\newblock {\em SIAM Journal on Imaging Sciences}, vol. 11, no. 1, pp. 473--506,
  2018.

\bibitem{pereyra2016proximal}
M.~Pereyra,
\newblock ``Proximal {M}arkov chain {M}onte {C}arlo algorithms,''
\newblock {\em Statistics and Computing}, vol. 26, no. 4, pp. 745--760, 2016.

\bibitem{schreck2016shrinkage}
A.~Schreck, G.~Fort, S.~Le~Corff, and E.~Moulines,
\newblock ``A shrinkage-thresholding {M}etropolis adjusted {L}angevin algorithm
  for {B}ayesian variable selection,''
\newblock {\em IEEE Journal of Selected Topics in Signal Processing}, vol. 10,
  no. 2, pp. 366--375, 2016.

\bibitem{bauschke2017convex}
H.~H. Bauschke and P.~L. Combettes,
\newblock {\em Convex analysis and monotone operator theory in {H}ilbert
  spaces},
\newblock Springer, 2017.

\bibitem{combettes2011proximal}
P.~L. Combettes and J.-C. Pesquet,
\newblock ``Proximal splitting methods in signal processing,''
\newblock in {\em Fixed-point algorithms for inverse problems in science and
  engineering}, pp. 185--212. Springer, 2011.

\bibitem{stuart2004conditional}
A.~M. Stuart, J.~Voss, and P.~Wilberg,
\newblock ``Conditional path sampling of {SDE}s and the {L}angevin {MCMC}
  method,''
\newblock {\em Communications in Mathematical Sciences}, vol. 2, no. 4, pp.
  685--697, 2004.

\bibitem{chouzenoux2014variable}
E.~Chouzenoux, J.-C. Pesquet, and A.~Repetti,
\newblock ``Variable metric forward--backward algorithm for minimizing the sum
  of a differentiable function and a convex function,''
\newblock {\em Journal of Optimization Theory and Applications}, vol. 162, no.
  1, pp. 107--132, 2014.

\bibitem{combettes2011proximity}
P.~L. Combettes, D.~D{\~u}ng, and B.~C. V{\~u},
\newblock ``Proximity for sums of composite functions,''
\newblock {\em Journal of Mathematical Analysis and applications}, vol. 380,
  no. 2, pp. 680--688, 2011.

\bibitem{marnissi2018majorize}
Y.~Marnissi, E.~Chouzenoux, A.~Benazza-Benyahia, and J.-C. Pesquet,
\newblock ``Majorize-minimize adapted {M}etropolis-{H}astings algorithm,''
\newblock {\em HAL preprint HAL:01909153}, 2018.

\bibitem{roberts2002langevin}
G.~O. Roberts and O.~Stramer,
\newblock ``Langevin diffusions and {M}etropolis-{H}astings algorithms,''
\newblock {\em Methodology and computing in applied probability}, vol. 4, no.
  4, pp. 337--357, 2002.

\bibitem{schifano2010majorization}
E.~D. Schifano, R.~L. Strawderman, and M.~T. Wells,
\newblock ``Majorization-minimization algorithms for nonsmoothly penalized
  objective functions,''
\newblock {\em Electronic Journal of Statistics}, vol. 4, pp. 1258--1299, 2010.

\bibitem{Chaux2007}
C.~Chaux, P.~L. Combettes, J.-C. Pesquet, and V.~R. Wajs,
\newblock ``A variational formulation for frame-based inverse problems,''
\newblock {\em Inverse Problems}, vol. 23, no. 4, pp. 1495--1518, June 2007.

\bibitem{jensen2004simulation}
J.~A. Jensen,
\newblock ``Simulation of advanced ultrasound systems using {F}ield {II},''
\newblock in {\em 4th IEEE International Symposium on Biomedical Imaging: From
  Nano to Macro}, 2004, pp. 636--639.

\bibitem{otsu1979threshold}
N.~Otsu,
\newblock ``A threshold selection method from gray-level histograms,''
\newblock {\em IEEE Transactions on Systems, Man, and Cybernetics}, vol. 9, no.
  1, pp. 62--66, 1979.

\bibitem{mallat1999wavelet}
S.~Mallat,
\newblock {\em A wavelet tour of signal processing},
\newblock Elsevier, 1999.

\bibitem{cai2017three}
X.~Cai, R.~Chan, M.~Nikolova, and T.~Zeng,
\newblock ``A three-stage approach for segmenting degraded color images:
  Smoothing, lifting and thresholding ({SLaT}),''
\newblock {\em Journal of Scientific Computing}, vol. 72, no. 3, pp.
  1313--1332, 2017.

\bibitem{becker2012quasi}
S.~Becker and J.~Fadili,
\newblock ``A quasi-{N}ewton proximal splitting method,''
\newblock in {\em Advances in Neural Information Processing Systems}, 2012, pp.
  2618--2626.

\bibitem{wang2004image}
Z.~Wang, A.~C. Bovik, H.~R. Sheikh, and E.~P. Simoncelli,
\newblock ``Image quality assessment: from error visibility to structural
  similarity,''
\newblock {\em IEEE Transactions on Image Processing}, vol. 13, no. 4, pp.
  600--612, 2004.

\bibitem{krishnan1997improved}
S.~Krishnan, K.~W. Rigby, and M.~O'donnell,
\newblock ``Improved estimation of phase aberration profiles,''
\newblock {\em IEEE Transactions on Ultrasonics, Ferroelectrics, and Frequency
  Control}, vol. 44, no. 3, pp. 701--713, 1997.

\bibitem{combettes2005signal}
P.~L. Combettes and V.~Wajs,
\newblock ``Signal recovery by proximal forward-backward splitting,''
\newblock {\em Multiscale Modeling \& Simulation}, vol. 4, no. 4, pp.
  1168--1200, 2005.

\bibitem{pustelnik2017proximity}
N.~Pustelnik and L.~Condat,
\newblock ``Proximity operator of a sum of functions; application to depth map
  estimation,''
\newblock {\em IEEE Signal Processing Letters}, vol. 24, no. 12, pp.
  1827--1831, 2017.

\end{thebibliography}

\end{document}